\newif\ifproc
\renewcommand{\paragraph}[1]{\smallskip\noindent\textbf{\textsf{#1}}}
\colorlet{tablerowcolor}{gray!10}
\setlist[enumerate]{nosep}
\newcommand{\df}[1]{\emph{#1}}
\spnewtheorem*{proofWithFormula}{Proof}{\itshape}{\rmfamily}
\spnewtheorem*{sketchWithFormula}{Proof sketch}{\itshape}{\rmfamily}
\spnewtheorem*{sketch}{Proof sketch}{\itshape}{\rmfamily}
\spnewtheorem{myclaim}{Claim}{\bfseries}{\itshape}
\spnewtheorem*{myproof}{Proof}{\itshape}{\rmfamily}
\let\doendproof\endproof
\renewcommand\endproof{~\hfill$\qed$\doendproof}
\newcounter{casecounter}
\newcounter{subcasecounter}
\newcounter{subsubcasecounter}
\newcommand{\ccase}[1]{%
  \stepcounter{casecounter}%
  \setcounter{subcasecounter}{0}%
  \setcounter{subsubcasecounter}{0}%
  \protected@write \@auxout {}{\string \newlabel {#1}{{\thecasecounter}{\thepage}{\thecasecounter}{#1}{}} }%
  \hypertarget{#1}{\noindent\textbf{Case \thecasecounter.}}
}
\newcommand{\subcase}[1]{%
  \stepcounter{subcasecounter}%
  \setcounter{subsubcasecounter}{0}%
  \protected@write \@auxout {}{\string \newlabel {#1}{{\thecasecounter.\thesubcasecounter}{\thepage}{\thecasecounter.\thesubcasecounter}{#1}{}} }%
  \hypertarget{#1}{\noindent\textbf{Case \thecasecounter.\thesubcasecounter.}}
}
\newcommand{\subsucase}[1]{%
  \stepcounter{subsubcasecounter}%
  \protected@write \@auxout {}{\string \newlabel {#1}{{\thecasecounter.\thesubcasecounter.\thesubsubcasecounter}{\thepage}{\thecasecounter.\thesubcasecounter.\thesubsubcasecounter}{#1}{}} }%
  \hypertarget{#1}{\noindent\textbf{Case \thecasecounter.\thesubcasecounter.\thesubsubcasecounter.}}
}
\title{On Mixed Linear Layouts of\\Series-Parallel Graphs}
\author{Patrizio~Angelini\inst{1}\orcidID{0000-0002-7602-1524} 
\and Michael~A.~Bekos\inst{2,5}\orcidID{0000-0002-3414-7444}
\and Philipp~Kindermann\inst{3,5}\orcidID{0000-0001-5764-7719}
\and Tamara~Mchedlidze\inst{4}\orcidID{0000-0002-1545-5580}}
\institute{%
  John Cabot University, Rome, Italy,
  \email{pangelini@johncabot.edu}
  \and
  Universit\"at T\"ubingen, Germany,
  \email{bekos@informatik.uni-tuebingen.de}
  \and
  Universit\"at W\"urzburg, Germany,
  \email{philipp.kindermann@uni-wuerzburg.de}
  \and
  Karlsruhe Institute of Technology (KIT), Germany,
  \email{mched@iti.uka.de}
  \and
  Universität Passau, Germany}
\authorrunning{P. Angelini et al.}
\titlerunning{On Mixed Linear Layouts of Series-Parallel Graphs} 
\newcommand{\layout}[2]{mixed #1-stack #2-queue layout\xspace}
\newcommand{\layouts}[2]{mixed #1-stack #2-queue layouts\xspace}
\newcommand{\graph}[2]{\ensuremath G(#1,#2)}
\newcommand{\mixed}{mixed layout\xspace}
\newcommand{\mixeds}{mixed layouts\xspace}
\begin{document}
\maketitle

\begin{abstract}
  A mixed $s$-stack $q$-queue layout of a graph consists of a linear order
  of its vertices and of a partition of its edges
  into $s$ stacks and $q$ queues, such that no two edges in the same stack 
  cross and no two edges in the same queue nest.
  In 1992, Heath and Rosenberg conjectured that every planar graph admits a 
  mixed 1-stack 1-queue layout. 
  Recently, Pupyrev disproved this conjectured by demonstrating a planar 
  partial 3-tree that does not admit a 1-stack 1-queue layout.
  In this note, we strengthen Pupyrev's result by showing that the conjecture
  does not hold even for 2-trees, also known as series-parallel graphs.    
  \keywords{mixed linear layouts, queue layouts, book embeddings, series-parallel graphs}
\end{abstract}

\section{Introduction}
\label{sec:introduction}
Over the years, linear layouts of graphs have been a fruitful subject of intense research, which has resulted in several remarkable results both of combinatorial and of algorithmic nature; see, e.g.,~\cite{DBLP:journals/jct/BernhartK79,DBLP:conf/focs/DujmovicJMMUW19,DBLP:conf/focs/Heath84,DBLP:journals/siamcomp/HeathR92,DBLP:journals/combinatorics/Wiechert17,DBLP:journals/jcss/Yannakakis89}.
A linear layout of graph is defined by a total order of its vertex-set and by a partition of its edge-set into a number of subsets, called \df{pages}.
By imposing different constraints on the edges that may reside in the same page, one obtains different types of linear layouts; see~\cite{DBLP:conf/wg/AlamBG0P18,DBLP:journals/ejc/BinucciGHL18,DBLP:journals/siamcomp/HeathR92,DBLP:conf/gd/Pupyrev17,DBLP:journals/jcss/Yannakakis89}. The most notable ones are arguably the stack and the queue layouts (the former are commonly referred to as \df{book embeddings} in the literature), as is evident from the numerous papers that have been published over the years; see~\cite{DBLP:journals/dmtcs/DujmovicW04} for a short introduction. 	

In a \df{stack} (\df{queue}) \df{layout} of a graph, no two indepedent edges of the same page, called \df{stack} (\df{queue}) in this context, are allowed to cross (nest, resp.) with respect to the underlying linear order; see~\cite{DBLP:journals/jct/BernhartK79} and~\cite{DBLP:journals/siamcomp/HeathR92}. In other words, the endpoints of the edges assigned to the same stack follow the last-in-first-out model in the underlying linear order, while the endpoints of the edges assigned to the same queue follow the first-in-first-out model; see~Fig.~\ref{fig:gh}. 
The minimum number of stacks (queues) required by any of the stack (queue) layouts of a graph is commonly referred to as its \df{stack-number} (\df{queue-number}, resp.).  
Accordingly, the stack-number (queue-number) of a class of graphs is the maximum stack-number (queue-number, resp.) over all its members.

\begin{figure}[t]
	\centering
	\subcaptionbox{\label{fig:stack}$2$-stack layout}{\includegraphics[page=1,width=0.32\textwidth]{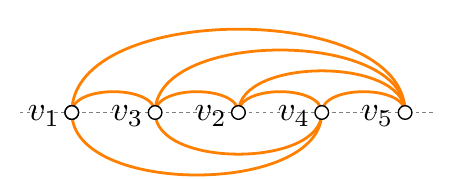}}
	\subcaptionbox{\label{fig:queue}$2$-queue layout}{\includegraphics[page=2,width=0.32\textwidth]{introduction}}
	\subcaptionbox{\label{fig:mixed}mixed $1$-stack $1$-queue}{\includegraphics[page=3,width=0.32\textwidth]{introduction}}
   \caption{%
   Illustration of different linear layouts of the complete graph on five vertices $v_1,\ldots,v_5$ minus the edge $(v_1,v_2)$.}
\label{fig:gh}
\end{figure}

\medskip\noindent\textbf{Known Results.} A large body of the literature is devoted to the study of 
bounds on the stack- and the queue-number of different classes of graphs. 

For stack layouts, the most remarkable result is due to Yannakakis, who back in 1986 showed that every planar graph admits a $4$-stack layout~\cite{DBLP:conf/stoc/Yannakakis86,DBLP:journals/jcss/Yannakakis89}. 
Recently, Bekos et al.~\cite{DBLP:journals/corr/abs-2004-07630} and Yannakakis~\cite{DBLP:journals/corr/abs-2004-01348} independently established that the stack-number of the class of planar graphs is $4$, by demonstrating planar graphs that do not admit $3$-stack layouts. Certain subclasses of planar graphs, however, allow for layouts with fewer than four stacks, e.g., 
$4$-connected planar graphs~\cite{NC08}, 
series-parallel graphs~\cite{DBLP:conf/cocoon/RengarajanM95}, 
planar $3$-trees~\cite{DBLP:conf/focs/Heath84}, 
and others~\cite{DBLP:journals/algorithmica/BekosGR16,DBLP:journals/mp/CornuejolsNP83,Ewald1973,DBLP:journals/dcg/FraysseixMP95,DBLP:journals/dam/GuanY2019,DBLP:conf/esa/0001K19,DBLP:journals/appml/KainenO07}.

For queue layouts, Dujmovi\'c et al.~\cite{DBLP:conf/focs/DujmovicJMMUW19} recently 
showed that every planar graph admits a $49$-queue layout, improving over previously known
logarithmic bounds~\cite{DBLP:journals/corr/BannisterDDEW18,DBLP:journals/siamcomp/BattistaFP13,DBLP:journals/jct/Dujmovic15,DBLP:journals/jgaa/DujmovicF18}. 
However, the exact queue-number of the class of planar graphs is not yet known, 
as the currently best-known lower bound is $4$~\cite{DBLP:conf/gd/AlamBG0P18}. Again, 
several subclasses of planar graphs allow for layouts with significantly fewer 
than $49$ queues, e.g., 
outerplanar  graphs~\cite{DBLP:journals/siamdm/HeathLR92}, 
series-parallel graphs~\cite{DBLP:conf/cocoon/RengarajanM95} and 
planar $3$-trees~\cite{DBLP:conf/gd/AlamBG0P18}.

\medskip\noindent\textbf{Motivation.} Back in 1992, Heath and Rosenberg~\cite{DBLP:journals/siamcomp/HeathR92}
 proposed a natural generalization of stack and queue layouts, called \df{\layout{$s$}{$q$}},
that supports  $s$ stack-pages and $q$ queue-pages.
In their seminal paper~\cite{DBLP:journals/siamcomp/HeathR92}, they
conjectured that every planar graph admits a \layout{$1$}{$1$}. However, 
Pupyrev~\cite{DBLP:conf/gd/Pupyrev17} recently showed that the conjecture does not hold even for partial planar $3$-trees.
This negative result naturally raises the question whether the conjecture 
holds for other subclasses of planar graphs. To this end,
 Pupyrev 
 conjectured that bipartite planar graphs admit \layouts{$1$}{$1$}.

\medskip\noindent\textbf{Our contribution.} We make a step forward 
in understanding which subclasses of planar graphs admit \layouts{$1$}{$1$}
by providing a negative certificate for the class of $2$-trees (also known as maximal series-parallel graphs). 
This improves upon the partial planar $3$-tree negative example by 
Pupyrev~\cite{DBLP:conf/gd/Pupyrev17}. Note that $2$-trees admit both $2$-stack 
layouts and $3$-queue layouts~\cite{DBLP:conf/cocoon/RengarajanM95}.


\medskip\noindent\textbf{Preliminaries.}
A \df{linear order} $\prec$ of a graph $G$ is a total order of its vertices. 
Let  $F=\{(u_i,v_i);\;i=1,\ldots,k\}$ be a set of $k \geq 2$ independent~edges such that $u_i \prec v_i$, for all $1 \leq i \leq k$. If the order is $[u_1, \ldots, u_k, v_k, \ldots, v_1]$, then we say that the edges of $F$ form a \df{$k$-rainbow}, while if the order is $[u_1, \ldots, u_k, v_1, \ldots, v_k]$, then the edges of $F$ form a \df{$k$-twist}. 
Two edges that form a $2$-twist ($2$-rainbow) are referred to as \df{crossing} (\df{nested}, resp.). A \df{stack} (\df{queue}) is a set of pairwise non-crossing (non-nested, resp.) edges. A \df{\layout{$s$}{$q$}} $\cal L$ of $G$ consists of a linear order $\prec$ of $G$ and a partition of the edges of $G$ into $s$ stacks and $q$ queues; for short, we refer to $\cal L$ as \df{\mixed} when $s=q=1$. An edge in a stack (queue) in $\cal L$ is called a \df{stack-edge} (\df{queue-edge},~resp.).

\begin{figure}[t]
	\centering
	\subcaptionbox{\label{fig:rainbow}$3$-rainbow}{\includegraphics[page=1,width=0.32\textwidth]{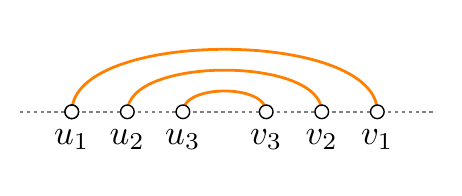}}
  \hfil
	\subcaptionbox{\label{fig:twist}$3$-twist}{\includegraphics[page=2,width=0.32\textwidth]{preliminaries}}
  \hfil
	\subcaptionbox{\label{fig:smiley-face-def}Smiley face}{\includegraphics[page=4,width=0.32\textwidth]{preliminaries}}
   \caption{%
   Illustration of:
   (a)~a $3$-rainbow,
   (b)~a $3$-twist, and
   (c)~a smiley face.}
\label{fig:preliminaries}
\end{figure}

The operation of \df{attaching} a vertex $u$ to an edge $(v,w)$ of a graph $G$
 consists of adding to $G$ vertex $u$ and edges $(u,v)$ and $(u,w)$. Vertex $u$ 
is said to be \df{attached} or being an \df{attachment} of $(v,w)$. 
A \df{$2$-tree} is a graph 
obtained from an edge by repeatedly attaching a vertex to an edge.
Consider a \layout{$s$}{$q$} $\cal L$ of a $2$-tree. We say 
that a vertex $u$ attached to an edge $(v,w)$ is a \df{stack-attachment} 
(\df{queue-attachment}) of $(v,w)$ if both $(u,v)$ and $(u,w)$ are stack-edges 
(queue-edges, resp.) in $\cal L$. Vertex $u$ is a \df{mixed-attachment} of 
$(v,w)$ if one of $(u,v)$ and $(u,w)$ is a queue-edge and the other is a 
stack-edge in $\cal L$.

\section{The Main Result}
\label{sec:series-parallel}

In this section, we define a family $\{G(k,\ell);\; k,\ell\in \mathbb{N}^+\}$ of $2$-trees, and we prove that infinitely many members of it do not admit \mixeds. For $\ell \geq 1$, $G(1,\ell)$ is an edge; for $k>1$, $G(k,\ell)$ is obtained from $G(k-1,\ell)$ by attaching $\ell$ vertices to each edge of it. For convenience, we let $\overline{G}(k,\ell)$ be the graph $G(k,\ell) \setminus G(k-1,\ell)$, that is, the graph induced by the edges that belong to $G(k,\ell)$ but not to $G(k-1,\ell)$.
In the following lemmas, we study properties of a \mixed of graph $\graph{k}{\ell}$.

\begin{lemma}\label{lem:two-stack}
Let $\cal L$ be a \mixed of $\graph{k}{\ell}$ with $k>1,\ell>2$.
Then, every edge of $\graph{k-1}{\ell}$ has at most two stack-attachments in $\cal L$.
\end{lemma}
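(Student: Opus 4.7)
The plan is to argue by contradiction. Suppose an edge $e=(v,w)$ of $\graph{k-1}{\ell}$ has at least three stack-attachments $u_1,u_2,u_3$ in $\cal L$; without loss of generality, assume $v\prec w$. By definition of stack-attachment, all six edges $(u_i,v)$ and $(u_i,w)$ for $i\in\{1,2,3\}$ belong to the unique stack of $\cal L$, so no two of them may cross in $\prec$. I will derive a crossing pair in every arrangement of the $u_i$, which is the desired contradiction.

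Partition the positions of $\cal L$ outside $\{v,w\}$ into three regions relative to $e$: the left region $L=\{x:x\prec v\}$, the middle region $M=\{x:v\prec x\prec w\}$, and the right region $R=\{x:w\prec x\}$. The first step is to rule out two stack-attachments lying in the same region. If, say, $u_a,u_b\in L$ with $u_a\prec u_b$, then the stack edges $(u_a,v)$ and $(u_b,w)$ span the interleaved intervals $[u_a,v]$ and $[u_b,w]$ and therefore cross. The cases $u_a,u_b\in M$ and $u_a,u_b\in R$ are analogous, with the witnessing crossing pair being $(u_a,w),(u_b,v)$ in the former and $(u_a,v),(u_b,w)$ in the latter. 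Thus each of $L,M,R$ contains at most one of $u_1,u_2,u_3$.

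Since there are three stack-attachments and three regions, there must be exactly one $u_i$ in each. Relabel so that $u_L\in L$, $u_M\in M$, $u_R\in R$, yielding the order $u_L\prec v\prec u_M\prec w\prec u_R$. Now the stack edges $(u_L,w)$ and $(u_R,v)$ correspond to the intervals $[u_L,w]$ and $[v,u_R]$, which interleave, so the edges cross. This contradicts the stack property and completes the proof.

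I do not foresee any real obstacle: the argument is a short case analysis driven by the observation that three attachments of a common edge produce six stack edges sharing the endpoints $v$ and $w$, and such a bouquet cannot be crossing-free in the three-region geometry around $\{v,w\}$. Note that the hypotheses $k>1$ and $\ell>2$ are used only to make the statement non-vacuous (so that an edge of $\graph{k-1}{\ell}$ actually has at least three attachments in $\graph{k}{\ell}$); they play no role in the structural argument above.
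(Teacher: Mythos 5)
Your proof is correct, but it takes a genuinely different route from the paper's. The paper disposes of this lemma in one line: the three stack-attachments $u,v,w$ together with the endpoints $a,b$ of the edge induce a $K_{2,3}$ all of whose edges lie in the single stack, and the subgraph carried by one stack must be outerplanar (Bernhart--Kainen), which $K_{2,3}$ is not. You instead unfold that outerplanarity fact into an explicit, elementary case analysis: no two attachments can share a region ($L$, $M$, or $R$) relative to the edge, and one attachment per region still forces the crossing $(u_L,w)$--$(u_R,v)$. Your case checks are all valid (each exhibited pair is a pair of independent edges forming a 2-twist), and your closing remark that $k>1,\ell>2$ play no structural role is accurate. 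What the paper's approach buys is brevity and reuse of a standard characterization; what yours buys is self-containedness --- it is essentially a from-scratch proof that $K_{2,3}$ admits no 1-page book embedding, specialized to this configuration. Either is acceptable here.
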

\begin{proof}
Let $(a,b)$ be an edge of $\graph{k-1}{\ell}$ and assume to the contrary that~there exist three stack-attachments $u$, $v$ and $w$ of $\overline{G}(k,\ell)$ attached to $(a,b)$ in $\cal L$. Neglecting edge $(a,b)$, vertices $a$, $b$, $u$, $v$ and $w$ induce a $K_{2,3}$ in $\graph{k}{\ell}$, whose edges are all stack-edges in $\cal L$. This is a contradiction, since the subgraph induced by the stack-edges of $\graph{k}{\ell}$ must be outerplanar~\cite{DBLP:journals/jct/BernhartK79}, while $K_{2,3}$ is not. 
\end{proof}
%
A \df{smiley face} $\langle a, b, u, v, c, d \rangle$ in a \mixed 
consists of six vertices $a\prec b\prec u\prec v\prec c\prec d$ and four 
edges $(a,b)$, $(c,d)$, $(a,d)$, and $(u,v)$, such that $(a,b)$, $(c,d)$, and 
$(a,d)$ are queue-edges, and thus $(u,v)$ is a stack-edge; 
see Fig.~\ref{fig:smiley-face-def}.

%

\begin{lemma}\label{lem:smiley}
Let $\cal L$ be a \mixed of $\graph{k}{\ell}$ with $k>1,\ell>2$.
Then, a smiley face cannot be formed by the vertices of $\graph{k-1}{\ell}$ in $\cal L$.
\end{lemma}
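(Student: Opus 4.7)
The plan is to derive a contradiction by focusing on the single stack-edge $(u,v)$ of the hypothetical smiley face. Since all six vertices $a,b,u,v,c,d$ lie in $\graph{k-1}{\ell}$, every edge among them --- and in particular $(u,v)$ --- already exists in $\graph{k-1}{\ell}$, because in the construction of $G(k,\ell)$ from $G(k-1,\ell)$ each new edge has one endpoint that is a new vertex. Hence, in forming $\graph{k}{\ell}$, we attach $\ell$ new vertices to $(u,v)$, and by Lemma~\ref{lem:two-stack} at most two of them are stack-attachments of $(u,v)$. Since $\ell>2$, there exists an attachment $x$ of $(u,v)$ that is either a queue-attachment or a mixed-attachment; in either case, at least one of $(u,x)$ and $(v,x)$ is a queue-edge. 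Reversing the linear order preserves the smiley-face structure while swapping the roles of $u$ and $v$, so I may assume without loss of generality that $(u,x)$ is this queue-edge.

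It then remains to show that, regardless of where $x$ is placed in $\prec$, a pair of nested queue-edges arises, contradicting the defining property of a queue. The key observation is that the three queue-edges $(a,b)$, $(a,d)$, and $(c,d)$ jointly ``trap'' every relevant placement of the second endpoint of a queue-edge incident to $u$: any queue-edge with both endpoints strictly between $a$ and $d$ is nested inside $(a,d)$; any queue-edge from a vertex preceding $a$ to a vertex succeeding $b$ (such as $u$) has $(a,b)$ nested inside it; and symmetrically, any queue-edge from a vertex preceding $c$ (such as $u$) to a vertex succeeding $d$ has $(c,d)$ nested inside it.

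With this observation the case analysis on the position of $x$ is mechanical: if $x \prec a$, then $(x,u)$ contains $(a,b)$ as a nested queue-edge; if $a \prec x \prec d$ (noting $x \notin \{b,u,v,c\}$), then $(x,u)$ or $(u,x)$ has both endpoints strictly inside the interval $(a,d)$, nesting inside the queue-edge $(a,d)$; and if $d \prec x$, then $(u,x)$ contains $(c,d)$ as a nested queue-edge. Every case contradicts the non-nesting property of queue-edges, so every attachment of $(u,v)$ must actually be a pure stack-attachment. This is impossible, however, since $\ell>2$ forces at least three attachments while Lemma~\ref{lem:two-stack} permits at most two, completing the contradiction.

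The main obstacle is organizational rather than technical: one must set up the symmetry between $u$ and $v$ correctly (so that only one ``side'' of the case analysis needs to be carried out explicitly) and carefully enumerate the sub-ranges of positions for $x$. No new combinatorial tool is needed beyond Lemma~\ref{lem:two-stack} and the definitions of stack- and queue-edges.
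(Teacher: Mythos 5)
Your proof is correct and follows essentially the same route as the paper: consider the $\ell>2$ attachments of the stack-edge $(u,v)$ in $\overline{G}(k,\ell)$, use the three queue-edges $(a,b)$, $(a,d)$, $(c,d)$ to rule out (via a 2-rainbow) a queue-edge from any attachment in each of the three position ranges, and then contradict Lemma~\ref{lem:two-stack}. The only cosmetic difference is that the paper handles $(u,x)$ and $(v,x)$ simultaneously instead of invoking your reversal symmetry, and your explicit remark that $(u,v)$ must already be an edge of $\graph{k-1}{\ell}$ makes precise a point the paper leaves implicit.
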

\begin{proof}
Assume to the contrary that a smiley face $\langle a,b,u,v,c,d\rangle$ is formed in~$\cal L$ by vertices of $\graph{k-1}{\ell}$. 
Consider any vertex $x$ of $\overline{G}(k,\ell)$ attached to the stack-edge $(u,v)$. 
If~$a\prec x\prec d$, then the queue-edge $(a,d)$ forms a 2-rainbow both with $(u,x)$ and with $(v,x)$; see Fig.~\ref{fig:smiley-face-inside}. 
If $x \prec a$, then the queue-edge $(a,b)$ forms a 2-rainbow both with $(u,x)$ and with $(v,x)$; see Fig.~\ref{fig:smiley-face-outside}. 
If $d \prec x$, then the queue-edge $(c,d)$ forms a 2-rainbow both with $(u,x)$ and with $(v,x)$.
Hence, neither $(u,x)$ nor $(v,x)$ is a queue-edge, 
so $x$ is a stack-attachment.
Since $\ell>2$, $(u,v)$  has more than two stack-attachments in~$\cal L$, contradicting Lemma~\ref{lem:two-stack}.
\end{proof}

\begin{figure}[t]    
  \subcaptionbox{\label{fig:smiley-face-inside}}{\includegraphics[page=5,width=0.28\textwidth]{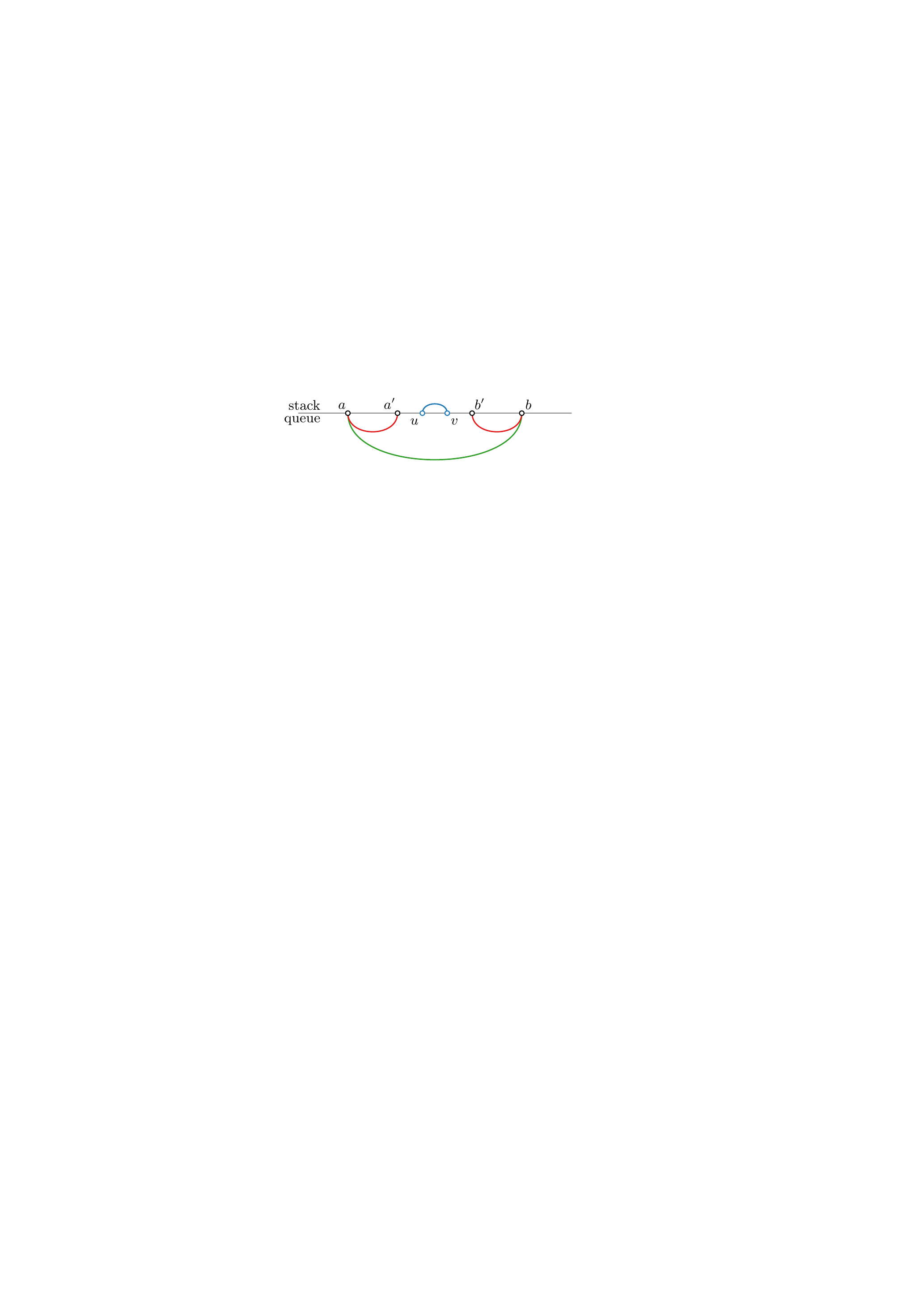}}
  \hfill
  \subcaptionbox{\label{fig:smiley-face-outside}}{\includegraphics[page=6,width=0.28\textwidth]{smiley-face}}
  \hfill
  \subcaptionbox{\label{fig:queue-in-the-middle-left}}{\includegraphics[page=5,width=.16\textwidth]{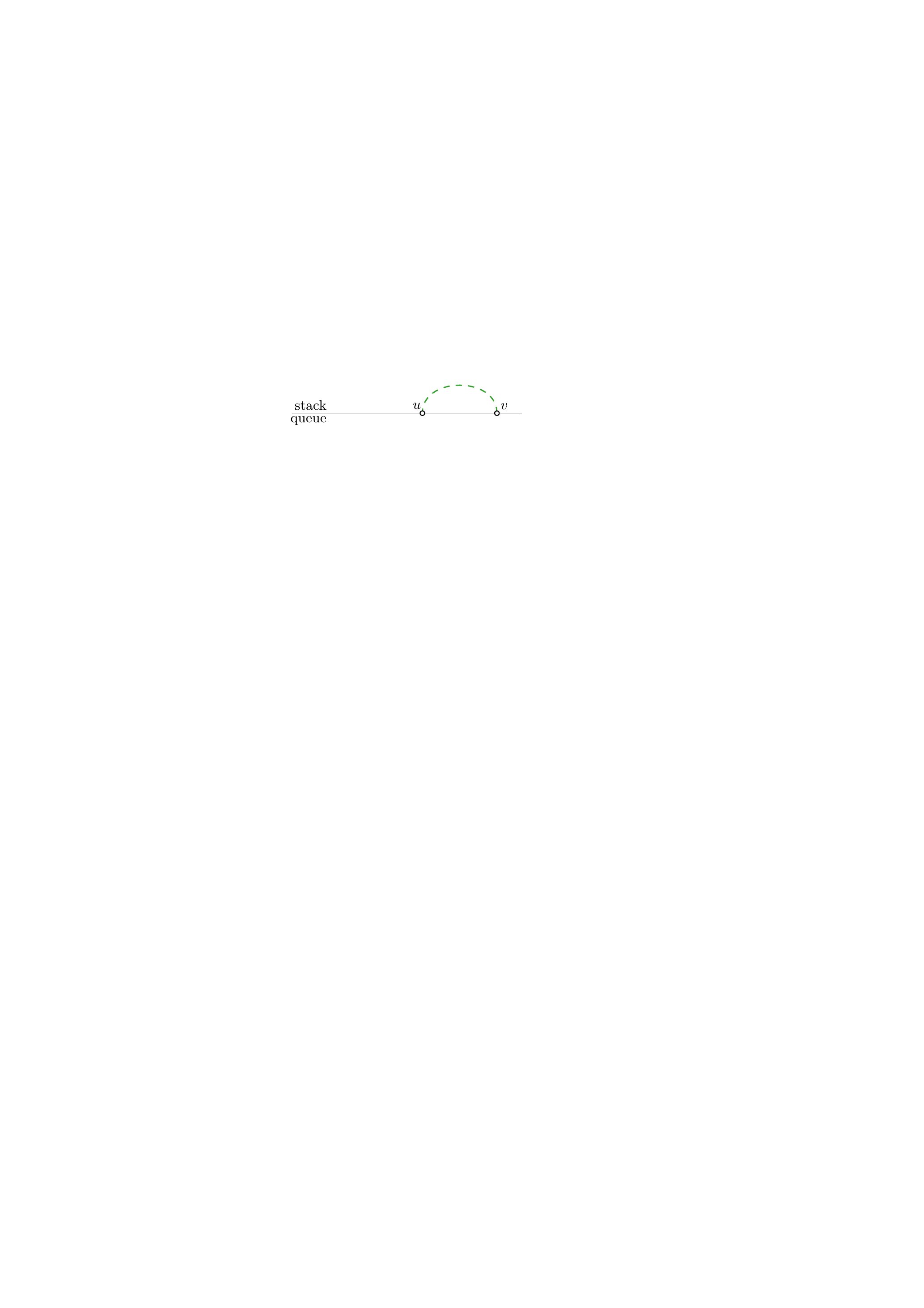}}
  \hfill
  \subcaptionbox{\label{fig:queue-in-the-middle-right}}{\includegraphics[page=6,width=.2\textwidth]{queue-in-the-middle}}
  \caption{Illustrations for the proofs 
  (a--b)~of Lemma~\ref{lem:smiley}, and
  (c--d)~of Lemma~\ref{lem:queue-in-the-middle}.}%
  \label{fig:smiley-queue-in-the-middle}%
\end{figure}

\begin{lemma}\label{lem:queue-in-the-middle}
Let $\cal L$ be a \mixed of $\graph{k}{\ell}$ with $k>1,\ell>2$.
Let $a,b,c$~be queue-attachments of an edge $(u,v)$ of $\graph{k-1}{\ell}$ with $u\prec v$. 
Then \mbox{$u \prec a,b,c\prec v$.}
\end{lemma}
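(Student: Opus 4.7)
The plan is a contradiction argument that uses only the queue property (no two independent queue-edges form a $2$-rainbow) and nothing about the graph beyond the definition of queue-attachment. In particular, neither Lemma~\ref{lem:two-stack} nor Lemma~\ref{lem:smiley} is invoked; this is pure linear-order bookkeeping on the six vertices $u,v,a,b,c$.

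First I would establish a pairwise statement: if $a$ and $b$ are distinct queue-attachments of $(u,v)$ with $u\prec v$, then either both of $a,b$ lie in the open interval $(u,v)$ of the linear order, or one is $\prec u$ and the other is $\succ v$. The proof is a short case split on the positions of $a$ and $b$ with respect to the pair $\{u,v\}$. In each of the forbidden patterns I expect to exhibit two independent queue-edges among $(u,a),(a,v),(u,b),(b,v)$ that form a $2$-rainbow; the representative bad case is $a\prec u$ and $u\prec b\prec v$, for which $(u,b)$ is nested inside $(a,v)$, and all other forbidden patterns yield an analogous nesting by symmetry.

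Then I would apply this pairwise statement to each of the three pairs $\{a,b\}$, $\{a,c\}$, and $\{b,c\}$. Suppose for contradiction that some queue-attachment, say $a$, lies outside the interval $(u,v)$; by symmetry assume $a\prec u$. For each $x\in\{b,c\}$ the pair $\{a,x\}$ must match one of the two admissible configurations, which forces $x\succ v$. Hence both $b$ and $c$ lie $\succ v$, so they share the same side of $(u,v)$; but then the pair $\{b,c\}$ satisfies neither of the two admissible configurations, contradicting the pairwise statement. Thus no queue-attachment of $(u,v)$ lies outside $(u,v)$, as required.

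The only real obstacle is carrying out the pairwise case analysis cleanly without overlooking a configuration; once it is nailed down, the extension from two to three attachments is a one-line pigeonhole observation about which of the two outside regions the attachments can fall into.
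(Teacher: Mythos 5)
Your proof is correct and takes essentially the same approach as the paper: the paper likewise establishes the pairwise fact that $a\prec u$ forces $v\prec b$ and $v\prec c$ (exhibiting the same nested pairs of queue-edges you identify) and then derives the contradiction $b\prec u\prec v\prec b$ by applying the symmetric statement to the pair $\{b,c\}$. Your explicit factoring of the argument into a standalone pairwise claim plus a pigeonhole step is just a mild reorganization of the same reasoning.
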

\begin{proof}\
Assume to the contrary that $a \prec u$ (the case $v \prec a$ is symmetric).
We first prove that $a\prec u$ implies $v\prec b,c$.
Indeed, if $b \prec a$, then the queue-edges $(b,v)$ and $(a,u)$ form a 2-rainbow; see Fig.~\ref{fig:queue-in-the-middle-left}.
If $a\prec b\prec v$, then the queue-edges $(a,v)$ and $(b,u)$ form a 2-rainbow; see Fig.~\ref{fig:queue-in-the-middle-right}.
Thus, $v\prec b$ and analogously $v\prec c$.
Symmetrically, $v \prec c$ implies $b\prec u$. 
Hence, $b\prec u\prec v\prec b$; a contradiction.
\end{proof}

\begin{lemma}\label{lem:six-queue}
  Let $\cal L$ be a \mixed of $\graph{k}{\ell}$ with $k>4,\ell>6$.
  Then, every queue-edge of $\graph{k-3}{\ell}$ has at most six queue-attachments in $\cal L$.
\end{lemma}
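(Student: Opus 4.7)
The plan is to suppose, toward a contradiction, that some queue-edge $(u,v)$ of $\graph{k-3}{\ell}$ admits seven queue-attachments $x_1\prec\cdots\prec x_7$ (by Lemma~\ref{lem:queue-in-the-middle} they all lie in $(u,v)$), and then to exhibit a smiley face formed by vertices of $\graph{k-1}{\ell}$, contradicting Lemma~\ref{lem:smiley}.

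For each $r\in\{2,\dots,6\}$ I would first analyze the queue-edge $(u,x_r)$ of $\graph{k-2}{\ell}$ together with its $\ell>6$ attachments in $\overline{G}(k-1,\ell)$. By Lemma~\ref{lem:two-stack} at most two are stack-attachments, and no attachment can be a queue-attachment (otherwise, by Lemma~\ref{lem:queue-in-the-middle}, the induced queue-edge $(y,x_r)$ would nest inside $(u,v)$, forming a $2$-rainbow). So at least $\ell-2>4$ attachments are mixed, and a short $2$-rainbow analysis against the queue-edges $\{(u,v)\}\cup\{(u,x_j),(x_j,v)\}_{j\ne r}$ rules out the mixed sub-case with $(u,y)$ stack, leaving only the mixed attachments with $(u,y)$ queue, $(y,x_r)$ stack, and $y\prec v$. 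Symmetrically, every mixed attachment of $(x_r,v)$ satisfies $(x_r,y)$ stack, $(y,v)$ queue, and $u\prec y$. An application of Lemma~\ref{lem:smiley} then locates $y$: since $x_r\in(x_1,x_7)$, having $y$ strictly in $(x_1,x_7)$ would make $\langle u,x_1,\min(y,x_r),\max(y,x_r),x_7,v\rangle$ a smiley face (queue-edges $(u,x_1),(x_7,v),(u,v)$); hence every mixed attachment of $(u,x_r)$ lies in $(-\infty,u)\cup(u,x_1)\cup(x_7,v)$, and every mixed attachment of $(x_r,v)$ lies in $(u,x_1)\cup(x_7,v)\cup(v,\infty)$.

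Next I would sharpen these region bounds using more delicate smiley-face constructions. The prototype: two mixed attachments $y_1\prec y_2$ of $(u,x_{r_1}),(u,x_{r_2})$ both in $(u,x_1)$ immediately yield the smiley face $\langle u,y_1,y_2,x_{r_2},x_7,v\rangle$, with queue-edges $(u,y_1),(x_7,v),(u,v)$ and stack-edge $(y_2,x_{r_2})$. Analogous constructions --- pairing a Case-A attachment of $(u,x_r)$ with a Case-B attachment of $(x_{r'},v)$ in the same region --- bound the total number of mixed attachments in $(u,x_1)\cup(x_7,v)$ by a constant. Pooling over the ten edges $(u,x_r),(x_r,v)$ for $r\in\{2,\dots,6\}$ then forces at least $10(\ell-2)-O(1)>30$ mixed attachments into the outer regions $(-\infty,u)\cup(v,\infty)$. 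A final stack non-crossing argument (two stack-edges $(y,x_r),(y',x_{r'})$ with $y\prec y'\prec u$ and $r<r'$ cross, so the $r$-values must form a non-increasing sequence in $y$-order), combined with one last smiley-face construction using a triple of such attachments in the same outer region, produces the contradiction.

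The main obstacle is this second phase: exhaustively checking every combination of Case-A / Case-B attachment types with each of the four regions $(-\infty,u),(u,x_1),(x_7,v),(v,\infty)$, and in particular extracting a contradiction from the outer regions, where queue-rainbow arguments alone do not suffice and one must combine Lemma~\ref{lem:smiley} with the stack non-crossing constraint in a more delicate way.
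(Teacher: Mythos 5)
Your first phase tracks the paper's proof closely and is sound: you reduce each relevant attachment to a mixed-attachment whose stack-edge is the one incident to $x_r$ (the paper's preliminary step, via $2$-rainbows with $(u,v)$, $(u,x_1)$, $(v,x_7)$), you exclude $w\prec u$ for attachments of $(v,x_r)$ and $v\prec w$ for attachments of $(u,x_r)$ (the paper's Claim~\ref{claim:before-u}), and you exclude $(x_1,x_7)$ via the smiley face $\langle u,x_1,\cdot,\cdot,x_7,v\rangle$ (the paper's Claim~\ref{claim:between-x1-x7}). Your prototype smiley face for two attachments of $(u,\cdot)$-type edges both in $(u,x_1)$ is also valid. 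The gap is the second phase, which you yourself flag as the obstacle, and it is real. First, the pairing argument does not bound the middle regions by a constant in all combinations: two mixed-attachments $y'_1\prec y'_2$ of $(x_{r'},v)$-type edges lying in $(u,x_1)$ carry queue-edges $(y'_i,v)$ and stack-edges $(y'_i,x_{r_i})$, and there is no queue-edge with both endpoints to the left of $y'_2$, so no smiley face can be completed. The paper instead handles $(u,x_1)$ and $(x_7,v)$ (Claims~\ref{claim:between-x7-v} and~\ref{claim:between-u-x1}) by pairing the offending attachment with a mixed-attachment of the \emph{adjacent} edge $(v,x_{i+1})$ or $(u,x_{i-1})$, using the fact that the two stack-edges must not cross to pin down the second attachment's position, and only then exhibiting a smiley face; this is the idea your sketch is missing.

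Second, your proposed endgame---a smiley face on a triple of attachments in the same outer region---cannot work: every attachment $y\prec u$ of an edge $(u,x_r)$ has its unique queue-edge ending at $u$, so no queue-edge lies entirely to the left of any stack-edge $(y,x_r)$ with $y\prec u\prec x_r$, and the smiley-face pattern has no candidate for its leftmost queue-edge. Likewise, the monotonicity constraint you state (non-increasing $r$-values in $y$-order) is satisfiable on its own and yields no contradiction. The actual contradiction must pair the \emph{two} outer regions: once the mixed-attachments of $(u,x_5)$ are forced into $(-\infty,u)$ and those of $(v,x_4)$ into $(v,\infty)$, the stack-edges $(w',x_5)$ and $(x_4,w)$ satisfy $w'\prec x_4\prec x_5\prec w$ and cross. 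You have every ingredient for this one-line finish but do not assemble it.
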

\begin{proof}
  Assume for a contradiction that there is a queue-edge $(u,v)$ in $\graph{k-3}{\ell}$
  with seven queue-attachments $x_1,\ldots,x_7$ in $\overline{G}(k-2,\ell)$.
  By Lemma~\ref{lem:queue-in-the-middle},
  all seven vertices have to lie between~$u$ and~$v$; w.l.o.g. assume that 
  $u\prec x_1\prec\ldots\prec x_7\prec v$.

  \begin{figure}[t]
      \subcaptionbox{\label{fig:six-queue-queue}$(w,x_i)$ queue-edge}{\includegraphics[page=1,width=.32\textwidth]{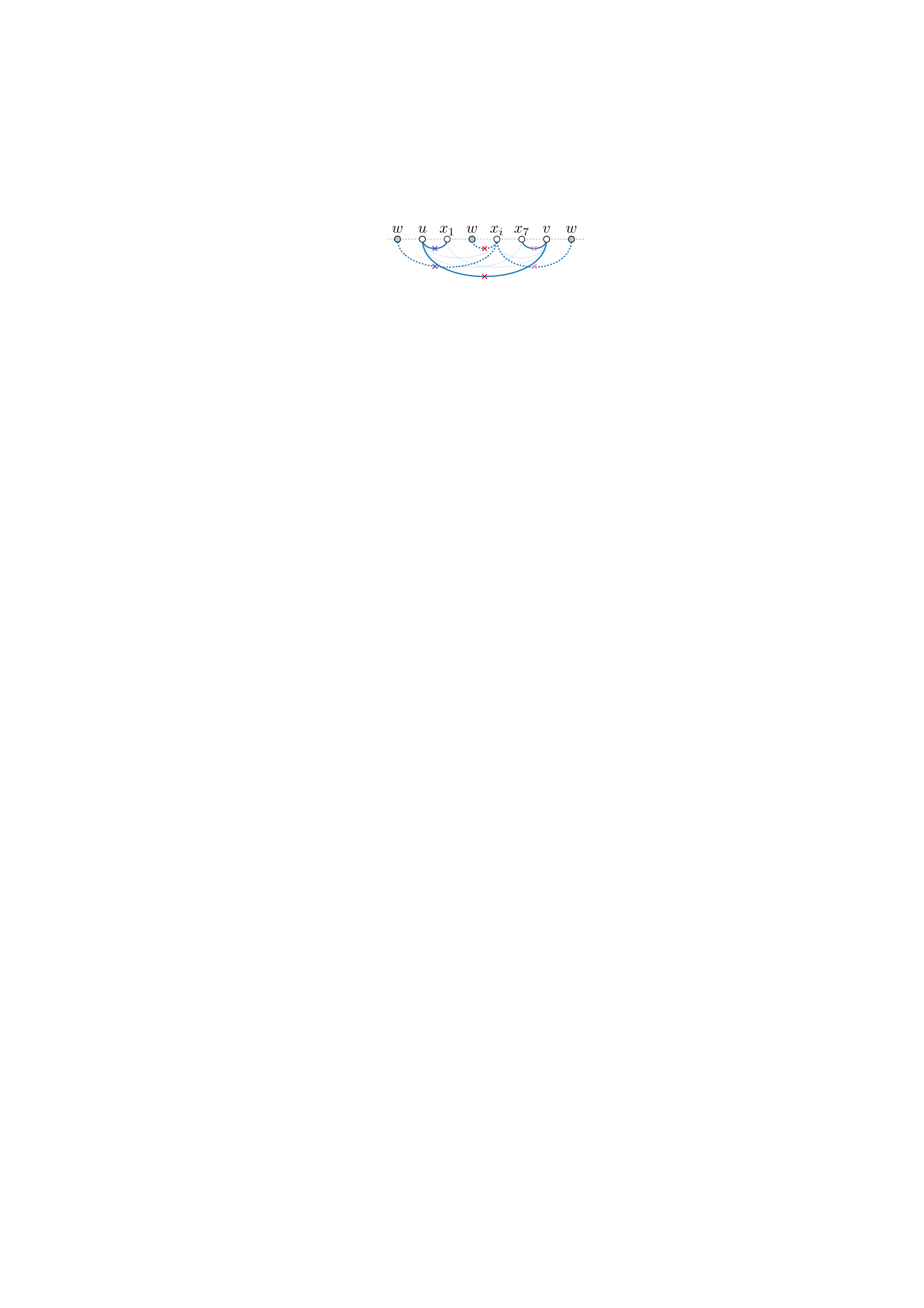}}
    \hfil
    \subcaptionbox{\label{fig:six-queue-before-u}Claim~\ref{claim:before-u}: $w\prec u$}{\includegraphics[page=2,width=0.32\textwidth]{queue-at-most-six-times}}
    \hfil
    \subcaptionbox{\label{fig:six-queue-between-x1-x7}Claim~\ref{claim:between-x1-x7}: $x_1\prec w\prec x_7$}{\includegraphics[page=3,width=.32\textwidth]{queue-at-most-six-times}}

   \subcaptionbox{\label{fig:six-queue-between-x7-v}Claim~\ref{claim:between-x7-v}: $x_7\prec w\prec v$}{\includegraphics[page=4,width=.32\textwidth]{queue-at-most-six-times}}
    \hfil
    \subcaptionbox{\label{fig:six-queue-between-u-x1}Claim~\ref{claim:between-u-x1}: $u\prec w\prec x_1$}{\includegraphics[page=5,width=0.32\textwidth]{queue-at-most-six-times}}
    \hfil
    \subcaptionbox{\label{fig:six-queue-after-v}$v\prec w$}{\includegraphics[page=6,width=.32\textwidth]{queue-at-most-six-times}}
    \caption{Illustrations for the proof of Lemma~\ref{lem:six-queue}.}
    \label{fig:six-queue}
  \end{figure}

  
  For any edge $(u,x_i)$ or $(v,x_i)$ with $2\le i\le 6$ belonging to $\overline{G}(k-1,\ell)$, 
  consider an attachment~$w$ of this edge.
  By Lemma~\ref{lem:two-stack}, we can assume that~$w$ is not a stack attachment.
  Further, if $(w,x_i)$ is a queue-edge, then it forms a 2-rainbow
  with either $(u,v)$
  , $(u,x_1)$
  , or $(v,x_7)$
  ; see Fig.~\ref{fig:six-queue-queue}. Hence, we assume that every selected attachment $w$ 
  of $(u,x_i)$ or $(v,x_i)$ with $2\le i\le 6$ in $\overline{G}(k-1,\ell)$
  is a mixed-attachment with stack-edge $(w,x_i)$.
  We prove Claims \ref{claim:before-u}--\ref{claim:between-u-x1} for edges $(v,x_i)$; for
  $(u,x_i)$ symmetric arguments work; see Fig.~\ref{fig:six-queue}.
  
  \begin{myclaim}\label{claim:before-u}
    There is no mixed-attachment $w$ of $(v,x_i)$ with $2\le i\le 6$ and $w\prec u$
    and there is no mixed-attachment $w$ of $(u,x_i)$ with $2\le i\le 6$ and $v\prec w$.
  \end{myclaim}
  \begin{proof}
    Otherwise, the queue-edges $(v,w)$ and $(u,x_1)$ form a 2-rainbow.
  \end{proof}


  \begin{myclaim}\label{claim:between-x1-x7}
    There is no mixed-attachment $w$ of $(v,x_i)$ or $(u,x_i)$ with $2\le i\le 6$ and $x_1\prec w\prec x_7$.
  \end{myclaim}
  \begin{proof}
    Otherwise, there is a smiley face $\langle u,x_1,x_i,w,x_7,v\rangle$
    or $\langle u,x_1,w,x_i,x_7,v\rangle$ in $\graph{k-1}{\ell}$, based on whether $x_i\prec w$ 
    or $w\prec x_i$, contradicting Lemma~\ref{lem:smiley}.
  \end{proof}
  

  \begin{myclaim}\label{claim:between-x7-v}
    There is no mixed-attachment $w$ of $(v,x_i)$ with $2\le i\le 6$ and $x_7\prec w\prec v$
    and no mixed-attachment $w$ of $(u,x_i)$ with $2\le i\le 6$ and $u\prec w\prec x_1$.
  \end{myclaim}
  \begin{proof}
    Let to the contrary~$w'$ be a mixed-attachment of $(v,x_{i+1})$.
    We have $x_i\prec w'\prec w$,
    as otherwise the stack-edges $(w',x_{i+1})$ and $(x_i,w)$ would cross.
    Then~a smiley face $\langle u,x_1,x_{i+1},w',w,v\rangle$ 
    exists in $\graph{k-1}{\ell}$, contradicting~Lemma~\ref{lem:smiley}.
    %
  \end{proof}
  
  %
  
  \begin{myclaim}\label{claim:between-u-x1}
    There is no mixed-attachment $w$ of $(v,x_i)$ with $3\le i\le 5$ and $u\prec w\prec x_1$
    and no mixed-attachment $w$ of $(u,x_i)$ with $3\le i\le 5$ and $x_7\prec w\prec v$.
  \end{myclaim}
  \begin{proof}
    Let to the contrary~$w'$ be a mixed-attachment of $(u,x_{i-1})$.
    We have $u\prec w\prec w'\prec x_i$,
    as otherwise the stack-edges $(w',x_{i-1})$ and $(x_i,w)$ would cross.
    However, by Claims~\ref{claim:between-x1-x7} and~\ref{claim:between-x7-v},
    this leads to a contradiction.
    %
  \end{proof}
  
  
  
\noindent  Now consider a mixed-attachment $w$ of $(v,x_4)$ and a mixed-attachment~$w'$ of $(u,x_5)$.
  By Claims~\ref{claim:before-u}--\ref{claim:between-u-x1}, we must have $v\prec w$
  and $w'\prec u$; see Fig.~\ref{fig:six-queue-after-v}. 
  However, then the stack-edges $(x_4,w)$ and $(x_5,w')$ cross. 
  This concludes the proof.
\end{proof}

\noindent Lemmas~\ref{lem:two-stack} and~\ref{lem:six-queue} imply the following 

\begin{corollary}\label{cor:mixed-attachments}
Let $\cal L$ be a \mixed of $\graph{k}{\ell}$ with $k>4,\ell>8$.
  Then, every queue-edge of $\graph{k-4}{\ell}$ has at least $\ell-8$ mixed-attachments in~$\cal L$.  
\end{corollary}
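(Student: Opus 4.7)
The plan is a straightforward counting argument on top of Lemmas~\ref{lem:two-stack} and~\ref{lem:six-queue}. Fix a queue-edge $e=(u,v)$ of $G(k-4,\ell)$. In the construction step that produces $G(k-3,\ell)$ from $G(k-4,\ell)$, exactly $\ell$ new vertices are attached to $e$, so $e$ has $\ell$ attachments in $\overline{G}(k-3,\ell)$, and each of them is by definition either a stack-, a queue-, or a mixed-attachment of $e$. It therefore suffices to show that at most two of these attachments are stack-attachments and at most six are queue-attachments; subtracting from $\ell$ then gives at least $\ell-2-6=\ell-8$ mixed-attachments, which is exactly the claim.

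The key enabling observation is that restricting $\cal L$ to any subgraph $G(j,\ell)\subseteq G(k,\ell)$ yields a \mixed of $G(j,\ell)$, because removing edges preserves both the non-crossing and the non-nesting conditions. Using this, I will apply Lemma~\ref{lem:two-stack} to the restriction of $\cal L$ to $G(k-3,\ell)$ with parameters $k-3$ and $\ell$: the hypotheses $k-3>1$ and $\ell>2$ follow from $k>4$ and $\ell>8$, and the conclusion bounds the stack-attachments of $e$ in $\overline{G}(k-3,\ell)$ by two. To bound queue-attachments, I will instead restrict $\cal L$ to $G(k-1,\ell)$ and apply Lemma~\ref{lem:six-queue} with parameters $k-1$ and $\ell$. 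After this shift the lemma speaks about queue-edges of $G((k-1)-3,\ell)=G(k-4,\ell)$ and their queue-attachments in $\overline{G}((k-1)-2,\ell)=\overline{G}(k-3,\ell)$, which is precisely the set of attachments I am counting, and it yields the required bound of six.

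I do not foresee any genuine obstacle: the argument is pure bookkeeping on top of the two previous lemmas, and the deletion of edges trivially transfers a \mixed from $G(k,\ell)$ to any $G(j,\ell)$. The only care needed is to keep the parameter shifts consistent and to verify that, after the restriction to $G(k-1,\ell)$, the hypotheses of Lemma~\ref{lem:six-queue} (i.e.\ $k'>4$ and $\ell>6$) are satisfied under the shift $k'=k-1$; this amounts to requiring $k$ large enough, which is compatible with the stated condition $k>4$ in the intended application and poses no real difficulty for the main result.
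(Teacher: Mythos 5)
Your proof is correct and is exactly the paper's (implicit) argument: the paper derives the corollary in one line from Lemmas~\ref{lem:two-stack} and~\ref{lem:six-queue}, i.e., of the $\ell$ attachments in $\overline{G}(k-3,\ell)$ at most $2$ are stack-attachments and at most $6$ are queue-attachments, leaving at least $\ell-8$ mixed ones. The index shift you flag is real --- invoking Lemma~\ref{lem:six-queue} on the restriction to $G(k-1,\ell)$ formally needs $k-1>4$ --- but this reflects looseness in the corollary's own level bookkeeping rather than a flaw in your reasoning: in the boundary case $k=5$ the edge in question is the single base edge of $G(1,\ell)$, the proof of Lemma~\ref{lem:six-queue} goes through verbatim for that instance, and the main theorem never uses the corollary in that form.
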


\noindent Next we define three patterns \ref{pattern-1}--\ref{pattern-2} and prove that they are forbidden in a \mixed. Each pattern is denoted by $\langle p_1,\ldots,p_7\rangle$, as it is defined on a set of seven vertices for which either $p_1\prec\ldots\prec p_7$ or $p_7\prec\ldots \prec p_1$ holds in $\cal L$; see Fig.~\ref{fig:patterns}. The involved edges in each pattern and their types are as follows.

\begin{figure}[t]
    \subcaptionbox{\label{fig:forbidden-pattern-i-def}Pattern~\ref{pattern-1}}{\includegraphics[page=5,width=.32\textwidth]{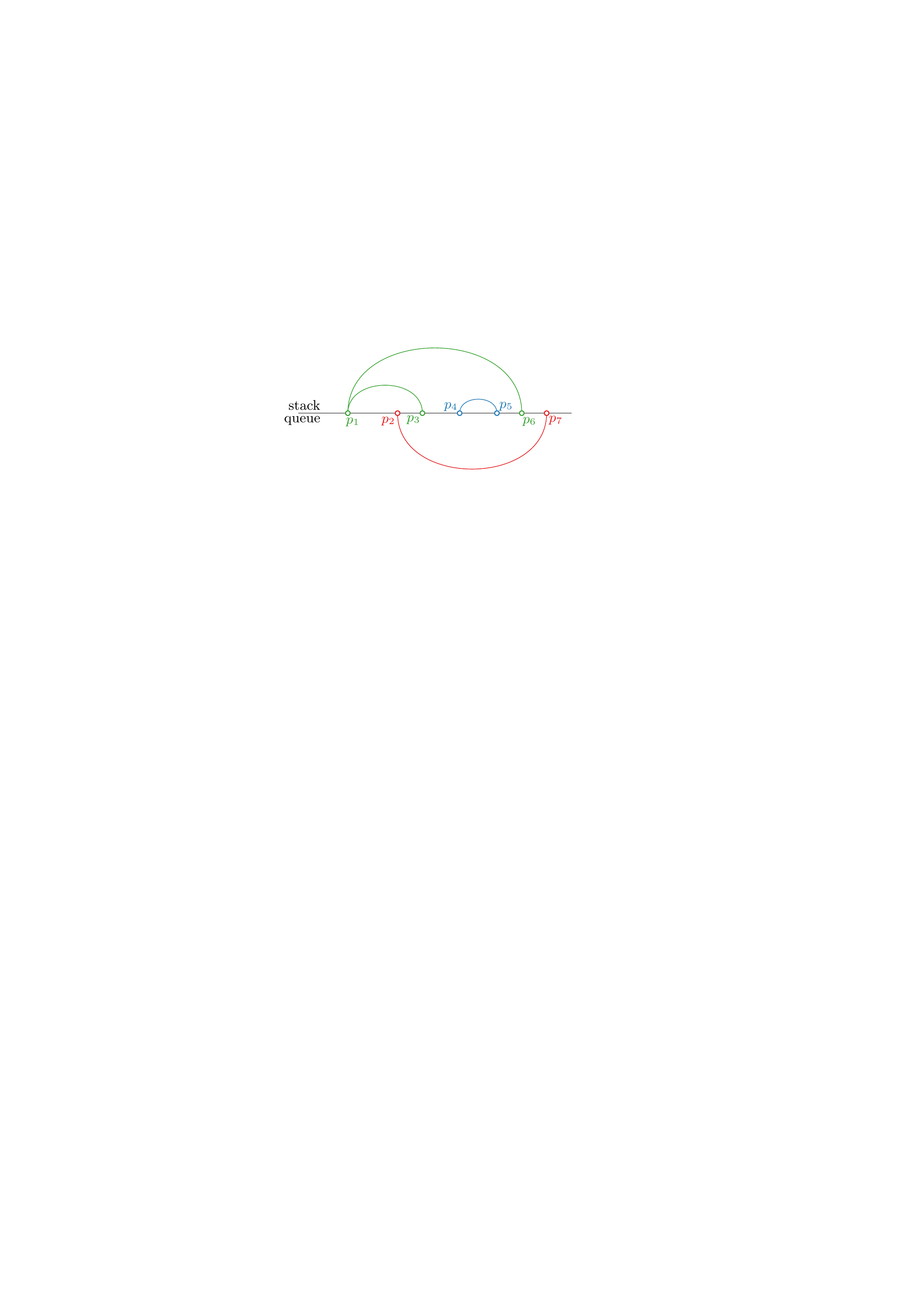}}
  \hfill
  \subcaptionbox{\label{fig:forbidden-pattern-ia-def}Pattern~\ref{pattern-1a}}{\includegraphics[page=9,width=0.32\textwidth]{forbidden-patterns}}
  \hfill
  \subcaptionbox{\label{fig:forbidden-pattern-ii-def}Pattern~\ref{pattern-2}}{\includegraphics[page=10,width=.32\textwidth]{forbidden-patterns}}
  \hfill
  \caption{Illustration of different patterns.}
  \label{fig:patterns}
\end{figure}

\medskip
\begin{enumerate}[label={\bf P.\arabic*}]
\item \label{pattern-1}
Stack-edges $(p_1,p_3)$, $(p_1,p_6)$ and $(p_4,p_5)$, and a queue-edge $(p_2,p_7)$.
\item \label{pattern-1a}
Stack-edges $(p_2,p_3)$, $(p_2,p_6)$ and $(p_4,p_5)$, and a queue-edge $(p_1,p_7)$.
\item \label{pattern-2}
Stack-edges $(p_1,p_7)$, $(p_2,p_4)$ and $(p_2,p_5)$, and queue-edges $(p_1,p_6)$ and $(p_3,p_7)$.
\end{enumerate}

\newcommand{\forbpat}{Let $\cal L$ be a \mixed of $\graph{k}{\ell}$ with $k>1,\ell>4$.  
Then, $\graph{k-1}{\ell}$ does not contain Patterns~\ref{pattern-1}--\ref{pattern-2} in $\cal L$.}
\wormhole{forbpat}

\begin{lemma}\label{lem:forbidden-pattern-i}
  \forbpat
\end{lemma}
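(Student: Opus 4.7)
\medskip\noindent\textbf{Proof proposal.} The plan is to handle each of the three patterns independently with a common template: pick one stack-edge $(u,v)$ of the pattern (which lies in $\graph{k-1}{\ell}$ and therefore has $\ell\geq 5$ vertices of $\overline{G}(k,\ell)$ attached to it), apply Lemma~\ref{lem:two-stack} to conclude that at most two of these attachments are stack-attachments, and show that no queue-attachment or mixed-attachment $x$ of $(u,v)$ can exist. Combining the constraints imposed by the queue-edges of the pattern (the queue-edge incident to $x$ must not nest with them) and by the stack-edges of the pattern (the stack-edge incident to $x$ must not cross them) will leave no legal position for $x$, yielding the desired contradiction.

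For Pattern~\ref{pattern-1} I would focus on the stack-edge $(p_4,p_5)$. Any queue-attachment is forced by Lemma~\ref{lem:queue-in-the-middle} into the open interval $(p_4,p_5)$, and then the induced queue-edge to $p_4$ nests inside the pattern's queue-edge $(p_2,p_7)$, a $2$-rainbow. For a mixed-attachment $x$, the non-nesting constraint with $(p_2,p_7)$ forces $x\prec p_2$ or $x\succ p_7$; in either subcase the stack-edge from $x$ to $p_5$ (or to $p_4$) interleaves with one of $(p_1,p_3)$ or $(p_1,p_6)$, producing a forbidden crossing. Pattern~\ref{pattern-1a} is handled analogously with the same edge $(p_4,p_5)$: a queue-attachment is nested inside $(p_1,p_7)$, while a mixed-attachment is pushed outside $[p_1,p_7]$ and then crosses $(p_2,p_6)$.

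For Pattern~\ref{pattern-2} I would instead use the stack-edge $(p_2,p_4)$. A queue-attachment nests inside $(p_1,p_6)$. For a mixed-attachment $x$, the two queue-edges $(p_1,p_6)$ and $(p_3,p_7)$ jointly restrict $x$'s location: if $(x,p_2)$ is the queue-edge, combining both non-nesting conditions confines $x$ to $x\prec p_1$ or to $p_6\prec x\prec p_7$; if $(x,p_4)$ is the queue-edge, the analogous combination yields $x\prec p_1$ or $x\succ p_7$. Every surviving window is then ruled out by non-crossing against $(p_1,p_7)$ and $(p_2,p_5)$; for example, the window $p_6\prec x\prec p_7$ forces the stack-edge $(p_4,x)$ to interleave with $(p_2,p_5)$.

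I expect the main obstacle to be the case analysis for Pattern~\ref{pattern-2}, where two queue-edges jointly constrain the mixed-attachment and two stack-edges of the pattern must both be checked for crossings; in the other two patterns a single queue-edge does most of the restrictive work. Each individual subcase reduces to checking a short list of endpoint inequalities in the linear order, but keeping track of which combination of pattern-edges rules out which position is where most of the bookkeeping lies.
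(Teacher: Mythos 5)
Your overall strategy coincides with the paper's: for Patterns~\ref{pattern-1} and~\ref{pattern-1a} you pivot on the stack-edge $(p_4,p_5)$, for Pattern~\ref{pattern-2} on $(p_2,p_4)$; you invoke Lemma~\ref{lem:two-stack} to bound the stack-attachments, extract a mixed-attachment of the pivot edge, and derive a crossing/nesting contradiction from the pattern's remaining edges. Your mixed-attachment case analyses are correct --- for Patterns~\ref{pattern-1} and~\ref{pattern-1a} you argue in the reverse order of the paper (queue constraint pushes $x$ outside first, then a stack crossing), and for Pattern~\ref{pattern-2} you split on which incident edge is the queue-edge rather than first pinning $x$ into the interval between $p_6$ and $p_7$, but in each case the endpoint inequalities check out and the content is the same.

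The one genuine misstep is your treatment of queue-attachments. You claim that \emph{any} queue-attachment of the pivot edge is forced by Lemma~\ref{lem:queue-in-the-middle} into the interval between its endpoints, and hence that \emph{no} queue-attachment can exist. But Lemma~\ref{lem:queue-in-the-middle} is stated and proved for \emph{three} queue-attachments $a,b,c$ simultaneously; a single queue-attachment is not constrained by it. Indeed, a lone queue-attachment $x$ of $(p_4,p_5)$ with $x\prec p_1$ in Pattern~\ref{pattern-1} creates no 2-rainbow with $(p_2,p_7)$ --- its two queue-edges merely cross it --- so the statement ``no queue-attachment exists'' is false as such and cannot be the basis of the count. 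The repair is the paper's counting argument: at most two stack-attachments by Lemma~\ref{lem:two-stack}, and at most two queue-attachments, because three or more would all lie strictly inside the pivot edge by Lemma~\ref{lem:queue-in-the-middle} and their queue-edges would then nest under the pattern's long queue-edge; since $\ell>4$, at least one attachment is mixed, and your subsequent analysis of that mixed-attachment finishes the proof.
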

\begin{sketch}
For a contradiction, let $\langle p_1,\ldots,p_7\rangle$ be Pattern~\ref{pattern-1} contained in $\graph{k-1}{\ell}$; see Fig.~\ref{fig:forbidden-pattern-i-contr}.
  We first argue that at least one of the $\ell>4$ vertices attached to $(p_4,p_5)$ in $\overline{G}(k,\ell)$ has to be a mixed-attachment.
  By Lemma~\ref{lem:two-stack}, at most two of them can be stack-attachments.
  If more than two of these vertices~are queue-attachments, then by Lemma~\ref{lem:queue-in-the-middle}, they all appear between $p_4$ and~$p_5$ in~$\cal L$,
  and thus any queue-edge incident to them creates a 2-rainbow with the queue-edge $(p_2,p_7)$.
  Hence, there is at least~one mixed-attachment~$x$ of $(p_4,p_5)$.
  Let $e$ and $e'$ be the stack- and queue-edge incident to $x$, respectively.
  Then, $p_3\prec x\prec p_6$, as
  otherwise~$e$ would cross one of the stack-edges $(p_1,p_3)$ and $(p_1,p_6)$.
  However, then~$e'$ forms a 2-rainbow with the queue-edge $(p_2,p_7)$;
  a contradiction. Similarly we argue for Pattern~\ref{pattern-1a}.
  For Pattern~\ref{pattern-2} see the appendix.
\end{sketch}

\begin{figure}[b]
  \centering
  \includegraphics[page=6,width=.32\textwidth]{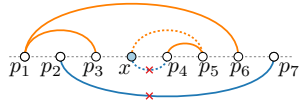}  
  \caption{Illustration for the proof
  of Pattern~\ref{pattern-1} in Lemma~\ref{lem:forbidden-pattern-i}.}
  \label{fig:forbidden-pattern-i-contr}
\end{figure}

\noindent We are now ready to prove the main result of this paper.

\newcommand{\dep}{33\xspace}
\newcommand{\main}{$\graph{k}{\ell}$ does not admit a \mixed if $k\ge 5,\ell\ge \dep$.}
\wormhole{main}
\begin{theorem}\label{thm:main}
\main  
\end{theorem}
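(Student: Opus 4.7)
The plan is to assume, for contradiction, that $\graph{k}{\ell}$ with $k\ge 5$ and $\ell\ge 33$ admits a \mixed, call it $\cal L$, and then to force one of Patterns~\ref{pattern-1}--\ref{pattern-2} to appear in $\graph{k-1}{\ell}$, contradicting Lemma~\ref{lem:forbidden-pattern-i}. All of the necessary ingredients are already on the table: the stack-attachment bound of Lemma~\ref{lem:two-stack}, the queue-attachment bound of Lemma~\ref{lem:six-queue}, the centrality restriction of Lemma~\ref{lem:queue-in-the-middle}, and the mixed-attachment count of Corollary~\ref{cor:mixed-attachments}. The threshold $\ell\ge 33$ is calibrated so that after subtracting the up to $8$ attachments allowed by Lemmas~\ref{lem:two-stack} and~\ref{lem:six-queue}, at least $25=4\cdot 6+1$ mixed-attachments remain, which is the minimum that forces $7$ of them into one of four configuration classes.

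I would start from the unique edge $e^\star=(u,v)$ of $\graph{1}{\ell}\subseteq\graph{k-4}{\ell}$, with $u\prec v$ in $\cal L$, and branch on the page of $e^\star$. If $e^\star$ is a queue-edge, Corollary~\ref{cor:mixed-attachments} applies directly and yields at least $\ell-8\ge 25$ mixed-attachments of $e^\star$ in $\overline{G}(2,\ell)$. If $e^\star$ is a stack-edge, Lemma~\ref{lem:two-stack} forces at least $\ell-2\ge 31$ of its attachments to carry a queue-edge in $\graph{2}{\ell}\subseteq\graph{k-3}{\ell}$; rerunning the proof of Corollary~\ref{cor:mixed-attachments} on such a queue-edge (which only relies on Lemmas~\ref{lem:two-stack} and~\ref{lem:six-queue}, both valid for any queue-edge of $\graph{k-3}{\ell}$) again produces $\ge 25$ mixed-attachments one level deeper. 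In either case I end up with a queue-edge $\hat e=(u',v')$ of $\graph{k-3}{\ell}$ carrying at least $25$ mixed-attachments, all of whose incident edges live in $\graph{k-1}{\ell}$.

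Next, I would classify each mixed-attachment $x$ of $\hat e$ by (i) its position relative to $u'$ and $v'$ and (ii) which of $(u',x),(v',x)$ is the queue-edge. Lemma~\ref{lem:queue-in-the-middle} excludes queue-only attachments outside $(u',v')$, and a short crossing argument on the two stack-fans at $u'$ and $v'$ -- if $x_1\prec x_2$ were both inside $(u',v')$ with $x_1$ of type ``queue-to-$u'$'' and $x_2$ of type ``queue-to-$v'$'', then the stack-edges $(x_1,v')$ and $(u',x_2)$ would form a twist -- forces every inside ``queue-to-$v'$'' attachment to precede every inside ``queue-to-$u'$'' attachment. Only four contiguous configuration classes therefore remain: left of $u'$, inside with queue to $v'$, inside with queue to $u'$, and right of $v'$. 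By the pigeonhole principle one of these classes contains at least seven mixed-attachments; together with $u'$ and $v'$ they furnish a nine-vertex gadget in $\graph{k-1}{\ell}$ whose stack/queue assignments are almost entirely forced.

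The hardest step is the final matching: for each of the four configuration classes I must select seven vertices and verify that their forced stack- and queue-edges instantiate one of Patterns~\ref{pattern-1}, \ref{pattern-1a}, or~\ref{pattern-2}. For some classes this will likely require one further pigeonhole step on the attachments of an inner stack- or queue-edge (again via Lemmas~\ref{lem:two-stack} and~\ref{lem:six-queue}) before a pattern emerges; the slack afforded by $\ell\ge 33$ comfortably accommodates such a secondary argument. Once a pattern is identified, Lemma~\ref{lem:forbidden-pattern-i} delivers the desired contradiction.
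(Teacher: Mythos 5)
Your overall strategy coincides with the paper's: locate a queue-edge low in the hierarchy, extract at least $25$ mixed-attachments via Corollary~\ref{cor:mixed-attachments}, pigeonhole them by configuration, and derive one of the forbidden Patterns~\ref{pattern-1}--\ref{pattern-2}. However, there are two genuine gaps. First, your reduction to ``four contiguous configuration classes'' is not justified: Lemma~\ref{lem:queue-in-the-middle} constrains queue-attachments, not mixed-attachments, so a mixed-attachment $x$ with $x\prec u'$ (or $v'\prec x$) can still have either $(u',x)$ or $(v',x)$ as its queue-edge, and your crossing argument only orders the two \emph{inside} sub-classes. There are really six configurations (position relative to $u',v'$ times which incident edge is the queue-edge), which is exactly what the paper uses; lumping the two outside positions means that your seven same-class vertices split into two sub-groups of which the larger is only guaranteed to have four members, whereas the case analysis needs five vertices with \emph{identical} configuration (the first configuration, for instance, uses all of $x_1,\dots,x_5$, with $x_4$ appearing in the Pattern~\ref{pattern-2} sub-case). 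The paper pigeonholes $25$ into six classes to get five, i.e., $25=6\cdot 4+1$, not $4\cdot 6+1$.

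Second, and more importantly, the heart of the proof is missing. The vertices $u',v',x_1,\dots,x_7$ of one class do not by themselves contain any of Patterns~\ref{pattern-1}--\ref{pattern-2}: each pattern requires a ``short'' stack-edge such as $(p_4,p_5)$ nested strictly inside the others, and every edge among your nine vertices is incident to $u'$ or $v'$. In the paper every configuration requires descending one further level: one takes a mixed-attachment $w$ of a suitable edge $(u,x_i)$ or $(v,x_i)$ (again via Corollary~\ref{cor:mixed-attachments}, which is one reason $k\ge 5$ is needed), and then a case analysis on the position of $w$ and on which of its two incident edges is the stack-edge either produces an immediate $2$-rainbow or crossing, or identifies a septuple such as $\langle u,v,x_1,x_2,w,x_3,x_5\rangle$ forming Pattern~\ref{pattern-1a} or $\langle u,v,x_2,x_3,x_4,x_5,w\rangle$ forming Pattern~\ref{pattern-2}. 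You acknowledge that such a step may ``likely require one further pigeonhole'' but do not carry it out; since this case analysis (three essentially distinct configurations, each with several sub-cases) constitutes the bulk of the paper's argument, the proposal as written is a plan rather than a proof.
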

\begin{sketch}
Assume to the contrary that $\graph{5}{\dep}$ admits a \mixed $\cal L$.
By Lemma~\ref{lem:two-stack}, there is at least one queue-edge $(u,v)$ in $\graph{2}{\dep}$. W.l.o.g., let $u \prec v$ in $\cal L$. 
By Corollary~\ref{cor:mixed-attachments}, $\graph{3}{\dep}$ contains at least 25 mixed-attachments, say $x_1,\ldots,x_{25}$,~of~$(u,v)$.
For every $i=1,\ldots,25$, one of the following applies: $x_i \prec u$, or $u \prec x_i \prec v$, or $v \prec x_i$. For each of the cases, we further distinguish whether the edge $(u,x_i)$ is a stack-edge or a queue-edge. This defines six configurations for $x_i$. 
Thus, at least five vertices, say w.l.o.g., $x_1,\ldots,x_5$, are attached with 
the same configuration to $(u,v)$; we assume w.l.o.g. that $x_1 \prec \ldots \prec x_5$. 
We show a contradiction in the case when $v \prec x_i$  
and $(u,x_i)$ is a queue-edge for all $i=1,\ldots,5$; the remaining cases are in the appendix.

\begin{figure}[t]
  \subcaptionbox{\label{fig:configuration-1-1}}{\includegraphics[page=7,width=.32\textwidth]{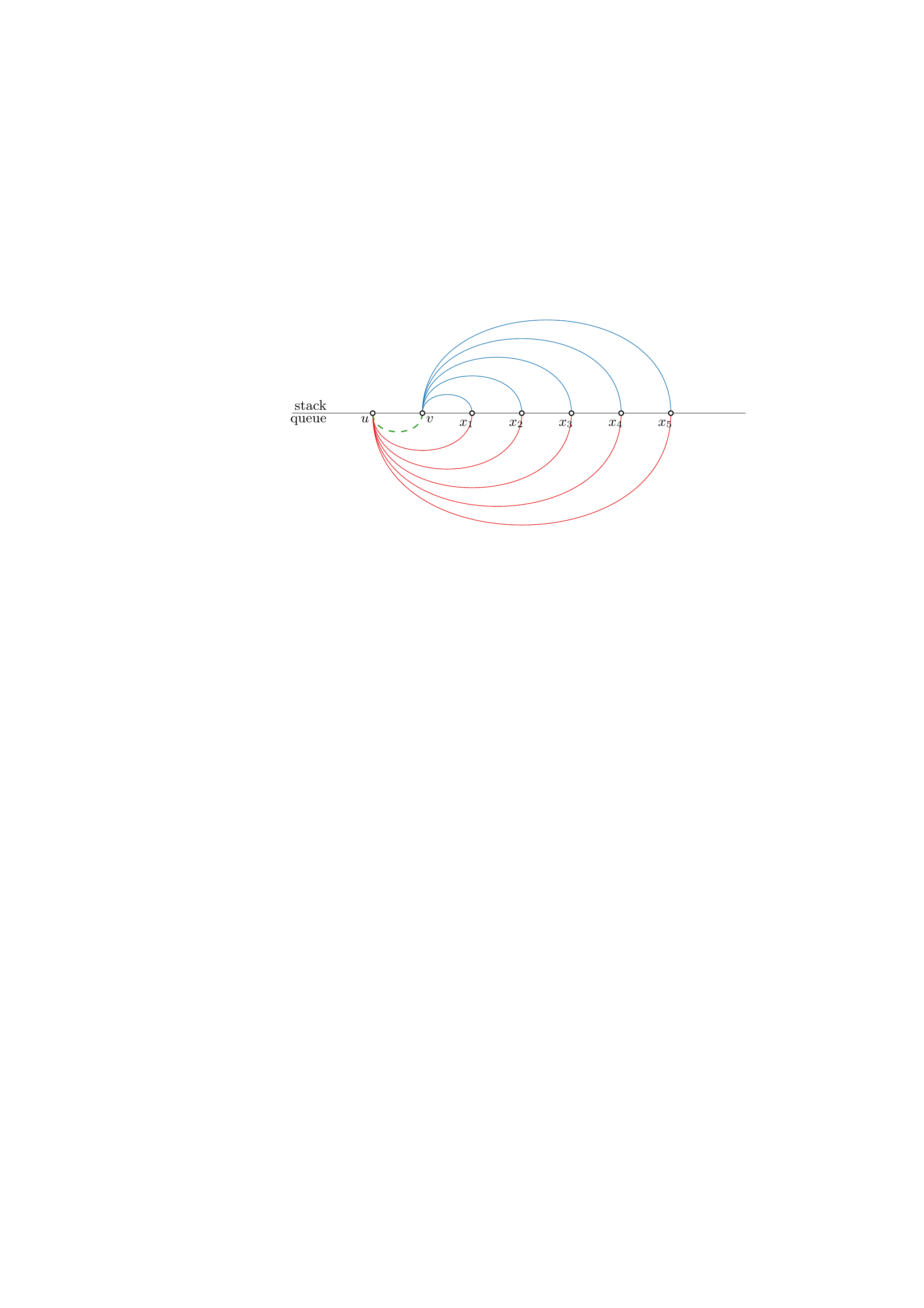}}
  \hfill
  \subcaptionbox{\label{fig:configuration-1-2}}{\includegraphics[page=9,width=.32\textwidth]{configuration-1}}
  \hfill
  \subcaptionbox{\label{fig:configuration-1-3}}{\includegraphics[page=10,width=.32\textwidth]{configuration-1}}
  \caption{Illustration for the first case of Theorem~\ref{thm:main}.}
  \label{fig:configuration-1}
\end{figure}

By Corollary~\ref{cor:mixed-attachments}, $\graph{4}{\dep}$ contains at least one mixed-attachment~$w$ of $(u,x_2)$. Thus, either $(x_2,w)$ or $(u,w)$ is a stack-edge. 
In the former case, the stack-edges $(v,x_1)$ and $(v,x_3)$ enforce $x_1\prec w\prec x_3$; 
see Fig.~\ref{fig:configuration-1-1}. 
Hence, $\langle u,v,x_1,x_2,w,x_3,x_5\rangle$ or $\langle u,v,x_1,w,x_2,x_3,x_5\rangle$ 
of $\graph{4}{\dep}$ form  Pattern~\ref{pattern-1a} in $\cal L$.
This contradicts Lemma~\ref{lem:forbidden-pattern-i}. 
In the latter case, the stack-edge $(v,x_5)$ enforces either $w\prec v$ or $x_5\prec w$. 
We consider three subcases. If $w\prec u$, then the queue-edges $(w,x_2)$ and
 $(u,x_1)$ form a 2-rainbow. If $u\prec w\prec v$, then the queue-edges
 $(w,x_2)$ and $(u,x_5)$ form a 2-rainbow; see Fig.~\ref{fig:configuration-1-2}.
Otherwise, $x_5\prec w$ holds. It follows that $\langle u,v,x_2,x_3,x_4,x_5,w\rangle$  
of $\graph{4}{\dep}$ form Pattern~\ref{pattern-2} in $\cal L$; see Fig.~\ref{fig:configuration-1-3}.
\end{sketch}

\section{Open Problems}
\label{sec:conclusions}

In this paper, we proved that $2$-trees do not admit \layouts{1}{1}. Since $2$-trees admit $2$-stack layouts and $3$-queue layouts~\cite{DBLP:conf/cocoon/RengarajanM95}, it is natural to ask whether they admit \layouts{1}{2}. We conclude with an algorithmic question, namely, what is the complexity of recognizing graphs that admit \layouts{1}{1}, even for $2$-trees?
Note that recently de Col et al.~\cite{DBLP:conf/gd/ColKN19} showed that 
 testing whether a (not necessarily planar) 
graph admits a \layout{$2$}{$1$} is NP-complete.

\bibliographystyle{splncs03}
\bibliography{abbrv,stacks,queues}

\clearpage
\appendix
\section*{\LARGE Appendix}

In this appendix, we give proofs that were omitted in the main part due to space constraints.

\begin{backInTime}{forbpat}
\begin{lemma}
\forbpat
\end{lemma}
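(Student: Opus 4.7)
My plan is to handle each of the three patterns separately by contradiction: assume the pattern occurs in $\graph{k-1}{\ell}$, then examine the $\ell>4$ vertices attached in $\overline{G}(k,\ell)$ to a suitably chosen edge of the pattern, and use Lemmas~\ref{lem:two-stack}, \ref{lem:smiley} and~\ref{lem:queue-in-the-middle} to reach a contradiction.

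For Pattern~\ref{pattern-1} I would follow the argument from the sketch in the main text, centered on the stack-edge $(p_4,p_5)$ of the pattern. By Lemma~\ref{lem:two-stack}, at most two of its $\ell>4$ attachments are stack-attachments. By Lemma~\ref{lem:queue-in-the-middle}, every queue-attachment must lie strictly between $p_4$ and $p_5$, and any third such queue-attachment would produce a queue-edge nested inside $(p_2,p_7)$; hence at most two attachments are queue-attachments. So some attachment $x$ is mixed; its incident stack-edge must avoid crossing $(p_1,p_3)$ and $(p_1,p_6)$, forcing $p_3\prec x\prec p_6$, and then its incident queue-edge nests inside $(p_2,p_7)$, a contradiction. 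Pattern~\ref{pattern-1a} follows by the identical template, using $(p_1,p_7)$ in place of $(p_2,p_7)$ and $(p_2,p_3),(p_2,p_6)$ in place of $(p_1,p_3),(p_1,p_6)$.

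For Pattern~\ref{pattern-2} I would concentrate instead on the pattern's stack-edge $(p_2,p_4)$ and rule out every attachment type other than the at most two stack-attachments permitted by Lemma~\ref{lem:two-stack}. No queue-attachment~$y$ is possible: Lemma~\ref{lem:queue-in-the-middle} places $y$ in the open interval $(p_2,p_4)$, whence the queue-edge $(p_2,y)$ nests inside the queue-edge $(p_1,p_6)$. For a mixed-attachment $y$ I would branch on which of $(y,p_2),(y,p_4)$ is the stack-edge, and within each branch step through the intervals delimited by $p_1,\ldots,p_7$; in each interval I expect either the stack-edge incident to $y$ to cross one of $(p_1,p_7)$ or $(p_2,p_5)$, or the queue-edge incident to $y$ to nest inside one of $(p_1,p_6)$ or $(p_3,p_7)$. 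Having eliminated queue- and mixed-attachments, only at most two attachments survive, contradicting $\ell>4$.

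The principal obstacle will be Pattern~\ref{pattern-2}: its richer edge set (two queue-edges and three stack-edges) demands a careful enumeration of the possible positions of the mixed-attachment $y$ and of the two stack/queue roles of its incident edges, with each combination being matched to a specific stack-crossing or queue-nesting. Patterns~\ref{pattern-1} and~\ref{pattern-1a} reduce, by contrast, to a single round of the same kind of case analysis applied to one mixed-attachment of $(p_4,p_5)$.
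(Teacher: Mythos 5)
Your proposal is correct and follows essentially the same route as the paper: the same pivot edges ($(p_4,p_5)$ for Patterns~\ref{pattern-1} and~\ref{pattern-1a}, the stack-edge $(p_2,p_4)$ for Pattern~\ref{pattern-2}), the same use of Lemmas~\ref{lem:two-stack} and~\ref{lem:queue-in-the-middle} to force a mixed-attachment, and the same crossings/nestings to finish. The case analysis you only sketch for Pattern~\ref{pattern-2} does go through exactly as you anticipate: the mixed-attachment is forced into the interval between $p_6$ and $p_7$, after which $(p_4,x)$ either nests with $(p_3,p_7)$ or crosses $(p_2,p_5)$.
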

\begin{proof}
We proved in the main part that $\graph{k-1}{\ell}$ does not contain Pattern~\ref{pattern-1}.
We complete the proof of this lemma by showing that $\graph{k-1}{\ell}$ contains neither Pattern~\ref{pattern-1a} nor Pattern~\ref{pattern-2}.

As already mentioned, the proof that $\graph{k-1}{\ell}$ does not contain Pattern~\ref{pattern-1a} is similar to the corresponding one for Pattern~\ref{pattern-1}. Here, we give the proof only for the sake of completeness. For a contradiction, let $\langle p_1,\ldots,p_7\rangle$ be Pattern~\ref{pattern-1a} contained in $\graph{k-1}{\ell}$; see Fig.~\ref{fig:forbidden-pattern-ia-contr}.
%
Consider a mixed-attachment~$x$ of edge $(p_4,p_5)$ in $\overline{G}(k,\ell)$, whose existence is proven based on Lemmas~\ref{lem:two-stack} and~\ref{lem:queue-in-the-middle} as in Pattern~\ref{pattern-1}.
Vertex $x$ has to lie between $p_3$ and $p_6$ in $\cal L$,
  as otherwise the stack-edge incident to $x$ would cross either the stack-edge $(p_2,p_6)$ or the stack-edge $(p_2,p_3)$.
In this case, however, 
  the queue-edge incident to $x$ forms a 2-rainbow with the queue-edge $(p_1,p_7)$; a contradiction.

\begin{figure}[h]
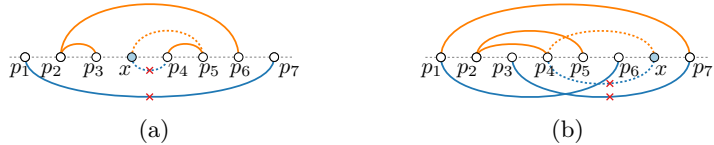

  \centering
  \subcaptionbox{\label{fig:forbidden-pattern-ia-contr}}{\includegraphics[page=12,width=.32\textwidth]{forbidden-patterns}}
  \hfil
  \subcaptionbox{\label{fig:forbidden-pattern-ii-contr}}{\includegraphics[page=11,width=.32\textwidth]{forbidden-patterns}}  
  \caption{Illustrations for the proofs 
  of Patterns~\ref{pattern-1a} and~\ref{pattern-2}.}
  \label{fig:forbidden-patterns-app}
\end{figure}

For a contradiction, let now $\langle p_1,\ldots,p_7\rangle$ be Pattern~\ref{pattern-2} contained in $\graph{k-1}{\ell}$; refer to Fig.~\ref{fig:forbidden-pattern-ii-contr}.
Similar to the proof of Pattern~\ref{pattern-1}, we first argue that at least one of the $\ell>4$ vertices attached to the edge $(p_2,p_4)$ in $\overline{G}(k,\ell)$ has to be a mixed-attachment.
Indeed, by Lemma~\ref{lem:two-stack}, at most two of these vertices can be stack-attachments.
  If more than two of these vertices are queue-attachments, then by Lemma~\ref{lem:queue-in-the-middle} they all appear between $p_2$ and~$p_4$ in $\cal L$,
  which is not possible as any queue-edge incident to them would create a 2-rainbow with the queue-edge $(p_1,p_6)$.
  Hence, at least one vertex~$x$ attached to $(p_2,p_4)$ is a mixed-attachment. Let $e$ and $e'$ be the stack- and queue-edge incident to $x$, respectively.
    Then, $x$ has to lie between $p_1$ and $p_7$ in $\cal L$,
  as otherwise~$e$ would cross the stack-edge $(p_1,p_7)$.
  Also, $x$ cannot lie between $p_1$ and $p_6$, as otherwise
  $e'$ would form a 2-rainbow with the queue-edge $(p_1,p_6)$.
  Hence, $x$ has to lie between $p_6$ and $p_7$ in $\cal L$.
  If the edge $(p_4,x)$ is a queue-edge, i.e., $e'=(p_4,x)$, then it forms a 2-rainbow with the queue-edge $(p_3,p_7)$.
  Otherwise, the edge $(p_4,x)$ is a stack-edge, i.e., $e=(p_4,x)$, which implies that it crosses the stack-edge $(p_2,p_5)$.
  In both cases, we have a contradiction.
\end{proof}
\end{backInTime}

\begin{backInTime}{main}
\begin{theorem}
\main
\end{theorem}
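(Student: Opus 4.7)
\begin{sketch}
The plan is to argue by contradiction along the lines of the detailed case outlined in the main body. Since $\graph{k}{\ell}$ contains $\graph{5}{\dep}$ as a spanning subgraph whenever $k\ge 5$ and $\ell\ge \dep$, it suffices to derive a contradiction from a hypothetical \mixed $\cal L$ of $\graph{5}{\dep}$. Lemma~\ref{lem:two-stack} supplies a queue-edge $(u,v)$ in $\graph{2}{\dep}$ with $u\prec v$, and Corollary~\ref{cor:mixed-attachments} yields at least $\dep-8=25$ mixed-attachments $x_1,\ldots,x_{25}$ of $(u,v)$. Each $x_i$ is classified by the position of $x_i$ relative to $u$ and $v$ (three options: $x_i\prec u$, $u\prec x_i\prec v$, $v\prec x_i$) and by which of the two edges $(u,x_i)$, $(v,x_i)$ is the stack-edge (two options), giving six configurations. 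Pigeonholing over the $25$ mixed-attachments yields five indices, relabeled $x_1\prec\ldots\prec x_5$, that share the same configuration.

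Reversing the linear order preserves the stack/queue status of every edge and swaps the roles of $u$ and $v$; it therefore identifies pairs of configurations, so that the six configurations collapse into three equivalence classes: (A) $v\prec x_i$ with $(u,x_i)$ a queue-edge, equivalently $x_i\prec u$ with $(v,x_i)$ a queue-edge, which is treated in the sketch in the main body; (B) $v\prec x_i$ with $(u,x_i)$ a stack-edge, equivalently $x_i\prec u$ with $(v,x_i)$ a stack-edge; and (C) $u\prec x_i\prec v$, whose two edge-type assignments are themselves related by reversal. It therefore suffices to rule out one representative of each of classes~(B) and (C).

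For each such representative, the plan is the one that worked for class~(A). Fix a central index, say $i=3$, pick a queue-edge incident to $x_3$ that belongs to $\graph{3}{\dep}$, and invoke Corollary~\ref{cor:mixed-attachments} once more to obtain a mixed-attachment $w$ of that edge; if no such queue-edge is available at $x_3$, apply Lemma~\ref{lem:two-stack} directly to the fan of stack-edges on one side of $(u,v)$ to rule out the configuration. Split into two subcases depending on which of the two edges incident to $w$ is the stack-edge, and in each subcase localize the position of $w$ in $\cal L$ by forbidding crossings with the stack-edges already established among $u$, $v$, and the $x_j$'s. In each of the few remaining possibilities, the local order on $\{u,v,x_1,\ldots,x_5,w\}$ exhibits either a $2$-rainbow of two queue-edges, or a crossing of two stack-edges, or one of the forbidden Patterns~\ref{pattern-1}--\ref{pattern-2} of Lemma~\ref{lem:forbidden-pattern-i} under a suitable identification of seven of these vertices with $p_1,\ldots,p_7$.

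The hard part will be the bookkeeping of this case analysis, together with matching each subcase to the correct forbidden pattern. The three patterns were engineered precisely to cover the subcases that arise here, but one typically has to try several identifications of the seven pattern vertices with vertices among $u$, $v$, $x_1,\ldots,x_5$, $w$ before the right one appears. I expect class~(C), where $u\prec x_i\prec v$, to be the most delicate: there $w$ may escape to either side of $(u,v)$, so the crossing-avoidance arguments must control stack-edges on both sides simultaneously, which is the technical situation closest to the setup of Pattern~\ref{pattern-2}.
\end{sketch}
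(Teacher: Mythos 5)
Your skeleton coincides with the paper's proof: the same contradiction setup on $\graph{5}{\dep}$, the same use of Lemma~\ref{lem:two-stack} to obtain a queue-edge $(u,v)$ of $\graph{2}{\dep}$, the same application of Corollary~\ref{cor:mixed-attachments} and the pigeonhole principle to extract five equally-configured mixed-attachments $x_1\prec\dots\prec x_5$, and your collapse of the six configurations into three classes under order reversal is exactly the paper's pairing of its Cases 1--3 with Cases 4--6. The per-class plan (take a mixed-attachment $w$ of a queue-edge among the $(u,x_j)$, $(v,x_j)$, branch on which edge at $w$ is the stack-edge, localize $w$ by crossing-avoidance, and close with a 2-rainbow, a crossing, or one of Patterns~\ref{pattern-1}--\ref{pattern-2}) is also the paper's. (Minor slip: the containment of $\graph{5}{\dep}$ in $\graph{k}{\ell}$ is as a subgraph, not a spanning subgraph; the reduction is still valid since a \mixed restricts to subgraphs. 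Also, your fallback ``if no such queue-edge is available at $x_3$'' never triggers, since each $x_i$ is a mixed-attachment and hence has exactly one queue-edge to $\{u,v\}$.)

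However, what you have written for classes (B) and (C) is a plan, not a proof, and that case analysis is the substance of the theorem. In particular, the attachment point is not ``a central index, say $i=3$'' uniformly: the paper uses $(u,x_2)$ in class (A), $(v,x_3)$ in class (B), and $(v,x_4)$ in class (C), and these choices are precisely what make the seven-vertex identifications with the patterns come out. For the record, the closing moves you still owe are: in class (B), if $(x_3,w)$ is the stack-edge then the stack-edges $(u,x_2)$ and $(u,x_4)$ force $x_2\prec w\prec x_4$ and $\langle u,v,x_2,w,x_3,x_4,x_5\rangle$ or $\langle u,v,x_2,x_3,w,x_4,x_5\rangle$ is Pattern~\ref{pattern-1}, while if $(v,w)$ is the stack-edge then $(u,x_1)$ forces $u\prec w\prec x_1$ and a 2-rainbow of queue-edges appears; in class (C), the first branch again yields Pattern~\ref{pattern-1}, and the second branch splits into three positions of $w$, the surviving one ($w\prec u$) giving Pattern~\ref{pattern-2} on $\langle w,u,x_1,x_2,x_3,x_4,v\rangle$. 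Your expectation that class (C) is the delicate one matching Pattern~\ref{pattern-2} is accurate, but until these identifications are exhibited the proof is incomplete.
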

\begin{proof}
Assume to the contrary that $\graph{5}{\dep}$ admits a \mixed $\cal L$.
Consider the subgraph $\graph{1}{\dep}$ of $\graph{5}{\dep}$. By definition, this subgraph is a single edge $(a,b)$. By Lemma~\ref{lem:two-stack}, in the subgraph $\graph{2}{\dep}$ of $\graph{5}{\dep}$, which is obtained by attaching \dep vertices to edge $(a,b)$, there is at least one queue-edge $(u,v)$. W.l.o.g., we assume that $u \prec v$ in $\cal L$. 
Consider now the subgraph $\graph{3}{\dep}$ of $\graph{5}{\dep}$. This subgraph contains \dep attachments of edge $(u,v)$. By Corollary~\ref{cor:mixed-attachments}, at least 25 of them are mixed-attachments. Denote them by $x_1,\ldots,x_{25}$. For each vertex $x_i$ with $i=1,\ldots,25$, one of the following applies: $x_i \prec u$, or $u \prec x_i \prec v$, or $v \prec x_i$. For each of them, we further distinguish whether the edge $(u,x_i)$ is a stack or a queue-edge. This defines six possible configurations for vertex $x_i$. By the pigeonhole principle, there exist at least five vertices, say w.l.o.g., $x_1,\ldots,x_5$, that are attached with the same configuration to edge $(u,v)$. In the following, we find a contradiction in each of these configurations, assuming w.l.o.g.\ $x_1\prec x_2\prec x_3\prec x_4\prec x_5$ in $L$.

\ccase{conf:1} \textit{For $i=1,\ldots,5$, $v \prec x_i$ and edge $(u,x_i)$ is a queue-edge}: The subgraph $\graph{4}{\dep}$ of $\graph{5}{\dep}$ contains \dep attachments to the queue-edge $(u,x_2)$. By Corollary~\ref{cor:mixed-attachments}, at least 25 of them are mixed-attachments. Let $w$ be such an attachment. It follows that either $(x_2,w)$ or $(u,w)$ is a stack-edge.

In the former case, the stack-edges $(v,x_1)$ and $(v,x_3)$ enforce $x_1\prec w\prec x_3$; see Fig.~\ref{fig:configuration-1-1}. It follows that $\langle u,v,x_1,w,x_2,x_4,x_5\rangle$ or $\langle u,v,x_1,w,x_2,x_4,x_5\rangle$ of $\graph{4}{\dep}$ form  Pattern~\ref{pattern-1a} in $\cal L$, depending on whether $x_1 \prec w \prec x_2$ or $x_2 \prec w \prec x_3$, respectively. This contradicts Lemma~\ref{lem:forbidden-pattern-i}. 

In the latter case, the stack-edge $(v,x_5)$ enforces that either $w\prec v$ or $x_5\prec w$. We consider three subcases. If $w\prec u$, then a 2-rainbow is formed by the queue-edges $(w,x_2)$ and $(u,x_1)$. If $u\prec w\prec v$, then a 2-rainbow is formed by the queue-edges $(w,x_2)$ and $(u,x_5)$; see Fig.~\ref{fig:configuration-1-2}.
Otherwise, $x_5\prec w$ holds. It follows that $\langle u,v,x_2,x_3,x_4,x_5,w\rangle$  of $\graph{4}{\dep}$ form Pattern~\ref{pattern-2} in $\cal L$; see Fig.~\ref{fig:configuration-1-3}.
All three cases lead to a contradiction.

\ccase{conf:2} \textit{For $i=1,\ldots,5$, $v \prec x_i$ and edge $(u,x_i)$ is a stack-edge}: The subgraph $\graph{4}{\dep}$ of $\graph{5}{\dep}$ contains \dep attachments to the queue-edge $(v,x_3)$. By Corollary~\ref{cor:mixed-attachments}, at least 25 of them are mixed-attachments. Let $w$ be such an attachment. It follows that either $(x_3,w)$ or $(v,w)$ is a stack-edge. 

\begin{figure}[t]
  \subcaptionbox{\label{fig:configuration-2-1}}{\includegraphics[page=6,width=.32\textwidth]{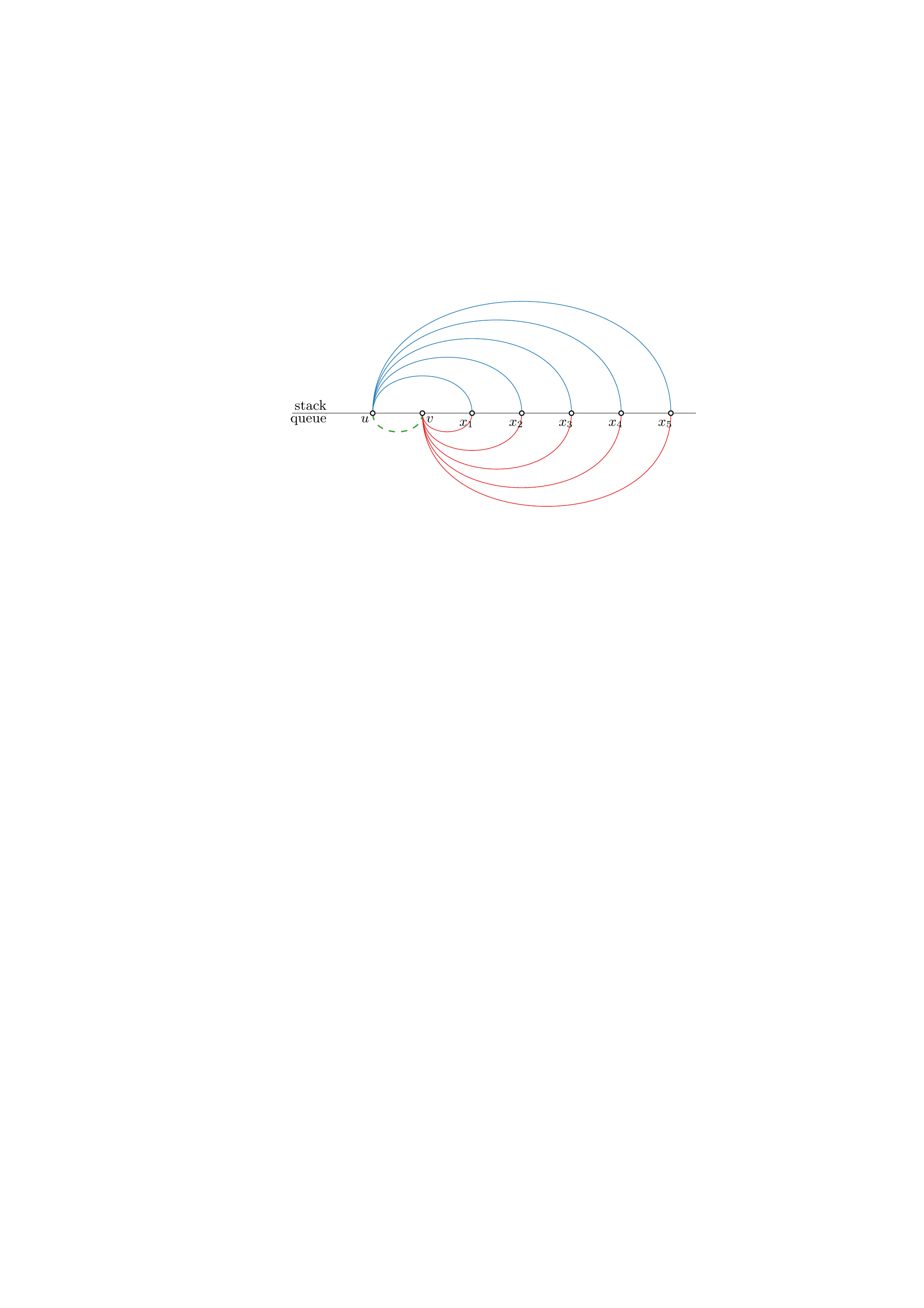}}
  \hfill
  \subcaptionbox{\label{fig:configuration-2-2}}{\includegraphics[page=7,width=.32\textwidth]{configuration-2}}
  \hfill
  \subcaptionbox{\label{fig:configuration-2-3}}{\includegraphics[page=8,width=.32\textwidth]{configuration-2}}
  \caption{Illustration for Case~\ref{conf:2} of Theorem~\ref{thm:main}.}
  \label{fig:configuration-2}
\end{figure}

In the former case, the stack-edges $(u,x_2)$ and $(u,x_4)$ enforce $x_2\prec w\prec x_4$; see Fig.~\ref{fig:configuration-2-1}. It follows that $\langle u,v,x_2,w,x_3,x_4,x_5\rangle$ or $\langle u,v,x_2,x_3,w,x_4,x_5\rangle$ of $\graph{4}{\dep}$ form Pattern~\ref{pattern-1} in $\cal L$, depending on whether $x_2 \prec w \prec x_3$ or $x_3 \prec w \prec x_4$, respectively. This contradicts Lemma~\ref{lem:forbidden-pattern-i}. 

In the latter case, the stack-edge $(u,x_1)$ enforces $u\prec w\prec x_1$. We consider two subcases. If $u\prec w\prec v$, then a 2-rainbow is formed by the queue-edges $(w,x_3)$ and $(v,x_1)$; see Fig.~\ref{fig:configuration-2-2}. Otherwise, $v\prec w\prec x_1$ holds, in which case a 2-rainbow is formed by the queue-edges $(v,x_5)$ and $(w,x_3)$; see Fig.~\ref{fig:configuration-2-3}. Both cases lead to a contradiction.

\ccase{conf:3} \textit{For $i=1,\ldots,5$, $u \prec x_i\prec v$ and edge $(u,x_i)$ is a stack-edge}: As in the previous cases, we first observe that the subgraph $\graph{4}{\dep}$ of $\graph{5}{\dep}$ contains \dep attachments to the queue-edge $(v,x_4)$. By Corollary~\ref{cor:mixed-attachments}, at least 25 of them are mixed-attachments. Let $w$ be such an attachment. It follows that either $(x_4,w)$ or $(v,w)$ is a stack-edge. 

In the former case, the stack-edges $(u,x_3)$ and $(u,x_5)$ enforce $x_3\prec w\prec x_5$; see Fig.~\ref{fig:configuration-3-1}. It follows that $\langle u,x_1,x_2,w,x_4,x_5,v\rangle$ or $\langle u,x_1,x_2,x_4,w,x_5,v\rangle$ of $\graph{4}{\dep}$ form Pattern~\ref{pattern-1} in $\cal L$, depending on whether $x_3 \prec w \prec x_4$ or $x_4 \prec w \prec x_5$, respectively. This contradicts Lemma~\ref{lem:forbidden-pattern-i}. 

\begin{figure}[t]
  \subcaptionbox{\label{fig:configuration-3-1}}{\includegraphics[page=7,width=.32\textwidth]{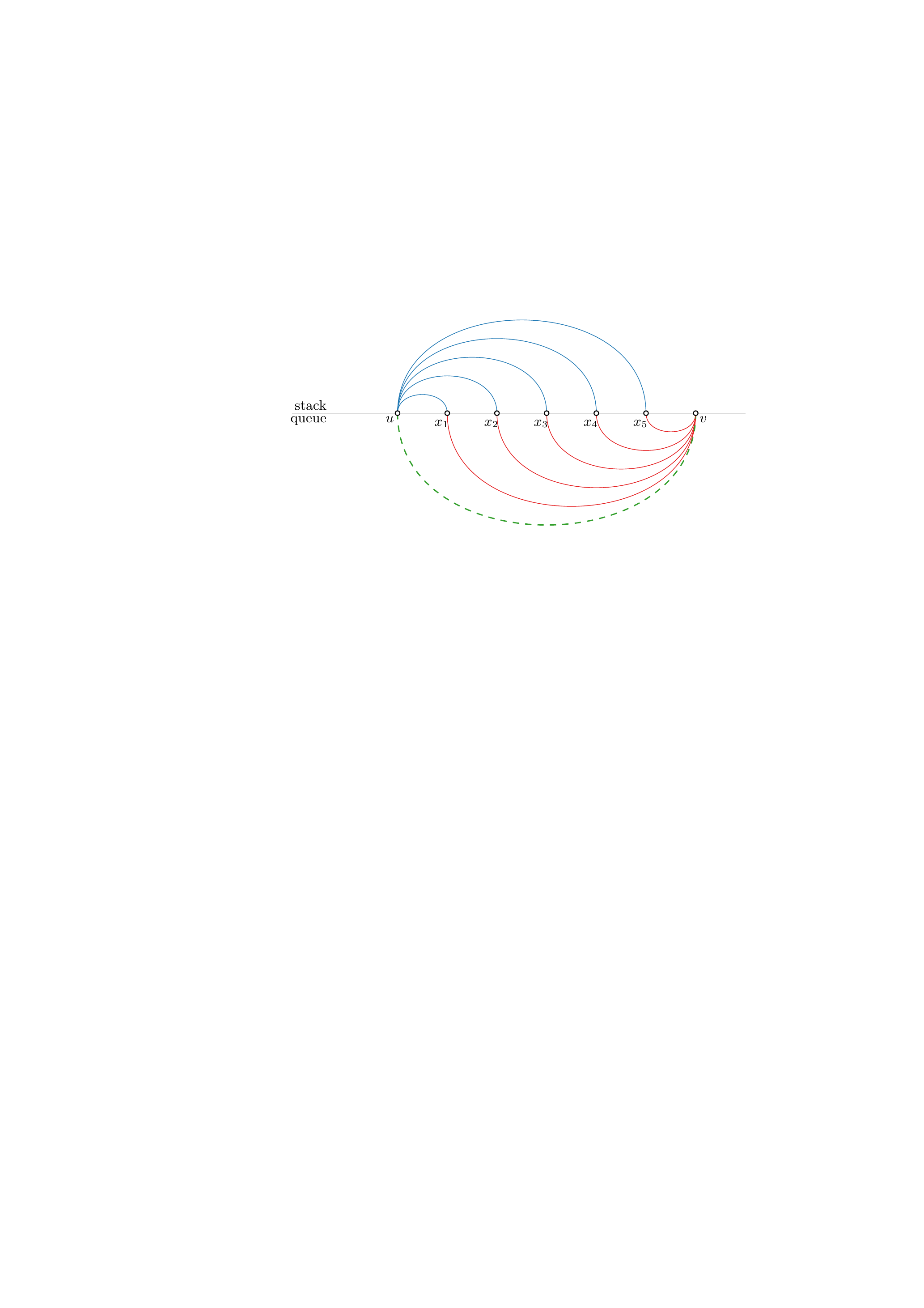}}
  \hfill
  \subcaptionbox{\label{fig:configuration-3-2}}{\includegraphics[page=8,width=.32\textwidth]{configuration-3}}
  \hfill
  \subcaptionbox{\label{fig:configuration-3-3}}{\includegraphics[page=9,width=.32\textwidth]{configuration-3}}
  \caption{Illustration for Case~\ref{conf:3} of Theorem~\ref{thm:main}.}
  \label{fig:configuration-3}
\end{figure}

In the latter case, the stack-edge $(u,x_5)$ enforces that either $w\prec u$ or $x_5\prec w$. We consider three subcases. 
If $v\prec w$, then a 2-rainbow is formed by the queue-edges $(w,x_4)$ and $(v,x_5)$. 
If $x_5\prec w\prec v$, then a 2-rainbow is formed by the queue-edges $(w,x_4)$ and $(v,x_1)$; see Fig.~\ref{fig:configuration-3-2}.
Otherwise, $w\prec u$ holds. It follows that $\langle w,u,x_1,x_2,x_3,x_4,v\rangle$  of $\graph{4}{\dep}$ form Pattern~\ref{pattern-2} in $\cal L$; see Fig.~\ref{fig:configuration-3-3}.
All three cases lead to a contradiction.  

\ccase{conf:4} \textit{For $i=1,\ldots,5$, $x_i \prec u$ and edge $(u,x_i)$ is a stack-edge}: This case is symmetric to Case~\ref{conf:1}.
  
\ccase{conf:5} \textit{For $i=1,\ldots,5$, $x_i \prec u$ and edge $(u,x_i)$ is a queue-edge}: This case is symmetric to Case~\ref{conf:2}.

\ccase{conf:6} \textit{For $i=1,\ldots,5$, $u\prec x_i \prec v$ and edge $(u,x_i)$ is a queue-edge}: This case is symmetric to Case~\ref{conf:3}.

\medskip\noindent Since Cases~\ref{conf:1}--\ref{conf:6} have led to a contradiction, $\graph{5}{\dep}$ does not admit any \mixed, as desired.
\end{proof}
\end{backInTime}

\end{document}